\documentclass[onecolumn]{IEEEtran}

\usepackage{xcolor,graphicx}
\usepackage[caption=false,font=footnotesize]{subfig}
\usepackage{amsmath,amsthm}
\usepackage{amsfonts}
\usepackage{amssymb}
\usepackage{bm}
\usepackage[normalem]{ulem}
\usepackage{comment}
\usepackage{cite}
\newcommand{\SaS}[1][\alpha]{S#1S}

\newcommand{\snr}{\mathrm{snr}_{\alpha}}

\DeclareMathOperator*{\argmin}{arg\,min}

\newtheorem{definition}{Definition}
\newtheorem{lemma}{Lemma}

\newtheorem{corollary}{Corollary}
\newtheorem{remark}{Remark}
\newtheorem{property}{Property}
\newtheorem{proposition}{Proposition}
\newtheorem{theorem}{Theorem}

\def\BibTeX{{\rm B\kern-.05em{\sc i\kern-.025em b}\kern-.08em
    T\kern-.1667em\lower.7ex\hbox{E}\kern-.125emX}}

\begin{document}
\title{On the Information Bottleneck for Stable Variables} 

% %%% Single author, or several authors with same affiliation:
% \author{%
%  \IEEEauthorblockN{Author 1 and Author 2}
% \IEEEauthorblockA{Department of Statistics and Data Science\\
%                    University 1\\
 %                   City 1\\
  %                  Email: author1@university1.edu}% }

\author{
        Dongmin~Park,~Jihad~Fahs and Mohamad Assaad
        
\thanks{ 
 Dongmin Park and Jihad Fahs are with the Department of Electrical and Computer Engineering at the American University of Beirut (AUB), Beirut, Lebanon (emails: \{djp00,jihad.fahs\}@aub.edu.lb). Jihad Fahs is a Fellow of the Center for Advanced Mathematical Sciences at the AUB. Mohamad Assaad is with the Laboratoire des Signaux et Syst\`emes (L2S) at CentraleSupelec, CNRS, University of Paris-Saclay, Paris, France (email: Mohamad.Assaad@centralesupelec.fr). The research of M. Assaad was funded in part by the Agence Nationale
de la Recherche (ANR) under the ANR-24-IAS1-0002-02
COMSEMA project. }
}

\maketitle

\begin{abstract}
We consider a generalization of the Gaussian information bottleneck problem to jointly stable variables $X$ and $Y$. Specifically, for $X = Y + A$, where $A$ is stable and independent of $Y$, we find optimal stable linear encoders of the form $T = kX + N$, with $N$ stable and independent of $X$. We characterize the linear stable information bottleneck in closed-form and its critical value. Furthermore, we show that, except for the Gaussian case, such stochastic linear encoders are strictly suboptimal, while becoming asymptotically optimal at large compression rates. Our linear solution recovers the Gaussian information bottleneck for jointly Gaussian variables as a special case.   
\end{abstract}

\begin{IEEEkeywords}
Information bottleneck, alpha-stable distributions, heavy-tailed distributions, linear representations.
\end{IEEEkeywords}
 %Information bottleneck, alpha-stable distributions, heavy-tailed distributions, linear representations, independent component analysis, dimensionality reduction.

\section{Introduction}
The information bottleneck (IB) framework provides a principled approach to learning representations that preserve information about a relevant variable while compressing the input data. This framework has been applied to problems such as semantic communication and deep learning~\cite{e26020102,7133169,10438074,Goldfeld2020}. Of particular relevance to this work, IB can be viewed as a remote lossy source coding problem under a logarithmic loss \cite{Dobrushin1962,Witsenhausen1980,Zaidi2020}. More explicitly, consider a variable \(X\) that is statistically dependent on another variable of interest \(Y\). The goal is to find an optimal stochastic representation or encoder \(T\) for input data \(X\), one that minimizes the compression rate while preserving a certain level of information about \(Y\). Formally, both notions of compression and information relevance are expressed through mutual information, and finding the optimal representation \(T\) (more specifically, the conditional probability law \(p_{T|X}\)) is done through the following optimization problem:
\begin{equation}\label{information_bottleneck_objective}
\min_{p_{T|X}}I(X;T)-\beta I(Y;T)
\end{equation}
given a fixed joint distribution \(p(x,y)\) and a tradeoff parameter \(\beta>0\). The representation \(T\) is constrained to satisfy the Markov chain \(T\rightarrow X\rightarrow Y\), so that \(T\) contains information about \(Y\) only through \(X\). The IB problem in its preceding formulation was first addressed in~\cite{tishby2000informationbottleneckmethod}.

In general, finding the optimal representation \(p_{T|X}\) is a challenging task. Existing exact treatments of IB are confined to two distinct regimes: discrete alphabets and jointly Gaussian variables; the latter being a notable case in which the problem admits closed-form characterization and where the optimal information-preserving representation has a simple parametric structure. %is the Gaussian Information Bottleneck (GIB), where \(X\) and \(Y\) are jointly Gaussian. \textcolor{red}{mention the scalar case as a separate reference.} 
The Gaussian Information Bottleneck (GIB) has been thoroughly studied in~\cite{ChechikGal2005IbfG}: when \(\mathbf X\) and \(\mathbf Y\) are jointly 0-mean Gaussian random vectors, the optimal representation \(\mathbf T = \mathrm{K}\mathbf X+\mathbf N\) is jointly Gaussian with \(\mathbf X\), 
    where \(\mathrm{K}\) is determined by the eigenstructure of the regression matrix associated with canonical correlation analysis, and \(\mathbf N\) is a standard isotropic Gaussian noise independent of \(\mathbf X\). 
    %Motivated by this result, we adopt the same linear form for the encoders in the new LSIB setting.
    
    In addition, the GIB solution exhibits a notion of critical value \(\beta^c\). When \(\beta<\beta^c\), the optimal encoder is one that maximally compresses \(X\) without preserving any information about it: \(I(X;T)=I(Y;T)=0\). As \(\beta\) exceeds \(\beta^c\), the encoder begins to preserve information about \(X\) and \(Y\), and as \(\beta\to\infty\), the encoder retains all the information about \(X\). %Interestingly, a similar notion of critical values also emerges in the LSIB setting.
    
    %in which the optimal representation \({\bf T}\) is jointly Gaussian with \({\bf X}\), and can be represented as a noisy linear projection of \({\bf X}\). The GIB has been thoroughly studied in~\cite{ChechikGal2005IbfG} where the authors provided an important example in which the optimal-information preserving representation has a simple parametric structure.

However, the jointly Gaussian setting does not capture random  quantities whose statistics are better described by heavy-tailed distributions found in many applications~\cite{Tsih1995,crovella1998heavy,clauset2009power}. A canonical class of distributions for modeling such phenomena is the family of symmetric \(\alpha\)-stable (\(\SaS\)) distributions\footnote{A random variable (RV) $X$ is {\em stable\/}  if for any $k \geq 2$ there is $C_k > 0$ and $D_k \in \mathbb{R}$ such that
 $\sum_{j=1}^k X_j \overset{d}{=} C_k X + D_k,$
where $\{X_j\}_j$ are independent copies of $X$. A stable distribution is either $\alpha$-stable or Gaussian.}. These distributions have heavy-tailed power-law densities  and extend the Gaussian law  since they arise as limit distributions of the sum of IID power-law RVs. In contrast, \(\SaS\) variables have infinite variance as opposed to Gaussian RVs that have finite moments of all orders. Examples of such models arise in EEG signals~\cite{11503735}, SAR images~\cite{kuzer2004}, and in the weight distribution of neural networks~\cite{bellido1993,jipeng2024}. 

This motivates studying the IB problem for $\alpha$-stable sources beyond the Gaussian setting.  Motivated by the reduced complexity of linear encoders compared to nonlinear ones and inspired by the GIB framework, we adopt a linear structure for our encoders with additive noise \(N\) belonging to the same statistical family as \(X\). We ask the following: What is the best stable linear encoder and the corresponding Linear Stable Information Bottleneck (LSIB)? and are stable linear encoders optimal for the $\alpha$-stable IB problem? 

In this work, we consider two jointly $\alpha$-stable variables $X$ and $Y$ and investigate the extent to which the structural properties of the GIB solution persist for the $\alpha$-stable case. We provide a closed-form expression for the LSIB and characterize the resulting information tradeoff. Interestingly, a similar notion of critical values $\beta^c$ also emerges in the LSIB setting. Our solution  coincides with the GIB when evaluated at $\alpha = 2$ corresponding to the Gaussian setup. Moreover, we prove that, except for the GIB, stable linear  encoders are strictly suboptimal. Nevertheless, they tend to be asymptotically optimal in the large compression rate regime. Finally, we highlight that with heavier tails (\(\alpha\) decreases),  it becomes more difficult to capture relevant information.

%From classical estimation theory, it is well known that for jointly Gaussian variables, the MMSE estimator has a linear form, i.e., \(\mathbb E[Y\,|\,X]\) is linear in \(X\). Outside the Gaussian setting, however, linear estimators are generally not optimal (when the score function is not linear). In an analogous sense, we show that for \(\alpha<2\), linear stochastic encoders are strictly suboptimal for the IB problem. 

\section{Preliminaries}
\subsection{Information Bottleneck}
    %Given \(\mathbf X\in\mathbb R^{n}\) and \(\mathbf Y\in\mathbb R^{n}\) jointly random vectors, the classical IB formulation aims to minimize the objective function
    %\begin{equation}
     %   \mathcal{L}(\mathbf T)=I(\mathbf X;\mathbf T)-\beta I(\mathbf Y;\mathbf T).
    %\end{equation}
    %over conditional probability laws \(p_{\mathbf T| \mathbf X}\), with \(\mathbf T \in\mathbb R^{n_t}\). Since \(\mathbf T\) is conditionally independent of \(\mathbf Y\) given \(\mathbf X\), \(\mathbf T\rightarrow\mathbf X\rightarrow \mathbf Y\) forms a Markov chain. 
    The standard formulation of the IB problem in the scalar case is given in equation~\eqref{information_bottleneck_objective}. 
    It is straightforward that more compression results in less relevant information preserved and that there is a tradeoff between the two. Therefore, the positive parameter \(\beta\) is called the tradeoff parameter; higher \(\beta\) captures more importance given to information preservation than to data compression. The case \(\beta\in[0,1]\) leads to a degenerate solution: due to the data processing inequality \(I(X;T)-\beta I(Y;T)\ge(1-\beta)I(X;T)\ge0\), and the optimal encoder \(T\) is degenerate: it compresses \(X\) maximally without preserving any information about \(X\) (or \(Y\)), achieving \(I(X;T)=I(Y;T)=0\). In the limit as \(\beta\rightarrow\infty\), no compression is happening and an optimal encoder is $T = X$. 

    Finally, it is worth noting that an alternative but equivalent formulation to~\eqref{information_bottleneck_objective} is the following constrained optimization problem
    \begin{equation}
    \max_{p_{T|X}:I(X;T)\le R} I(Y;T)
    \label{eq:consoptim}
    \end{equation}
    for some fixed rate \(R>0\).
\subsection{Symmetric \(\alpha\)-Stable Laws}
The symmetric \(\alpha\)-stable (\(\SaS\)) distributions represent a natural extension of the Gaussian family to the heavy-tailed regime. A scalar RV \(X\sim\SaS(\gamma)\)  with stability index \(0 < \alpha < 2\) and scale \(\gamma>0\), if its characteristic function 
\begin{equation*}
    \phi_X(\theta)
    =
    \mathbb E\!\left[e^{i\theta X}\right]
    =
    \exp\left\{-\gamma^\alpha |\theta|^\alpha\right\}.
\end{equation*}
In contrast to Gaussian RVs (\(\alpha=2\)), the $\SaS$ variables (\(\alpha<2\)) have power-law probability density functions \(p_X(x) \sim c|x|^{-1-\alpha}\) as \(|x|\to\infty\) and the variables have infinite variance. Thus, the \(\SaS\) model provides a natural setting for studying information-theoretic representations beyond the finite-variance Gaussian regime.

Akin to Gaussian variables, a defining property of \(\SaS\) distributions is the stability property: if \(X_1,\ldots,X_m\) are independent \(\SaS\) RVs with stability index \(\alpha\) and each with scale \(\gamma_i\), then
\begin{equation}
\label{eq:stabilityproperty}
    \sum_{i=1}^m c_iX_i
    \sim
    \SaS\left(
    \gamma
    \right),
    \quad
    \gamma^\alpha=\sum_{i=1}^m |c_i|^\alpha \gamma_i^\alpha,
\end{equation}
for some $\{c_i\} \in \mathbb{R}$. 
%Similarly, scaling preserves the family: if \(X\sim\SaS(\gamma)\), then \(cX\sim\SaS(|c|\gamma)\), for \(c\neq0\). Consequently, if \(Z\sim\SaS(1)\), then
%\begin{equation*}
 %   X\stackrel{d}{=}\gamma Z\implies h(X)=h(Z)+\log\gamma.
%\end{equation*}

Additional properties of \(\SaS\) distributions are provided in Appendix~\ref{app:alpha-stable_probability_laws}. More details on these distributions can be found in~\cite{NolanJohnP2020USDM,samorodnitsky_stable_2017,zolotarev_one-dimensional_1986}.

\section{Main Results}
We consider jointly \(\SaS\) RVs as heavy-tailed analogues of the jointly Gaussian variables underlying the classical GIB setting. The stability index \(\alpha\) therefore provides a parameter controlling the departure from the Gaussian regime, with \(\alpha=2\) recovering the finite-variance Gaussian case.

\subsection{Additive Stable Noise Model}\label{subsec:scalar_channel_model}
We observe a noisy measurement \(X\) of the relevant variable \(Y\), where the measurement is corrupted by additive stable noise \(A\) that is independent of \(Y\). The  model is hence
\begin{equation}
\label{eq:channelmodel}
    X=Y+A,
\end{equation}
where \(Y\sim\SaS(\gamma_Y)\) and \(A\sim\SaS(\gamma_A)\) additive noise independent of \(Y\). By the stability property~\eqref{eq:stabilityproperty}, \(X\) is also \(\SaS\), with scale parameter \(\gamma_X=(\gamma_Y^\alpha+\gamma_A^\alpha)^{1/\alpha}\).  Consequently, the objective is to construct a representation \(T\) of the noisy observation \(X\) that preserves information about the underlying variable \(Y\) while discarding information introduced by the external noise \(A\). 

%In the Gaussian case \(\alpha=2\), we can represent the relation of any joint Gaussian random variables \((X,Y)\) with correlation \(\rho\) as an additive noise channel of the form \(X=aY+N\) where \(N\) is a Gaussian noise independent of \(Y\), with constant \(a\) and variance of \(N\) appropriately chosen. This representation will be useful later when introducing analogues of the signal-to-noise ratio and correlation for the \(\alpha<2\) stable setting.
    
\subsection{Linear Stable Information Bottleneck}
For continuous RVs, a non-degenerate continuous deterministic
representation \(T=f(X)\) generally leads to \(I(X;T)=\infty\), reflecting the
infinite rate required to represent the resulting continuous variable without
randomization. Thus, such deterministic encoders are not suitable for the
finite-rate compression objective of the information bottleneck. Motivated by
the GIB solution, which takes the form of a
noisy linear projection of \(X\), we preserve the linear structure of the encoder while choosing the additive noise \(N\) from the same statistical
family as \(X\) and \(Y\). Specifically, we consider
\begin{equation*}
    T=kX+N,
\end{equation*}
where \(k\) is a scalar and \(N\) is \(\SaS\) noise, independent of \(X\), with a
scale parameter \(\gamma_N\). This choice provides a tractable benchmark for
studying the performance and limitations of linear stochastic encoders under
stable distributions. We refer to encoders of this form as \emph{stable linear encoders}. The optimization of the functional is therefore restricted to a smaller class of conditional laws \(p_{T|X}\):
\begin{equation}\label{eq:LSIB}
    \argmin_{p_{T|X}: T\text{ is a stable linear encoder}} I(X;T)-\beta I(Y;T)
\end{equation}
    
We refer to the optimization problem in~\eqref{eq:LSIB} as the \emph{linear stable information bottleneck} (LSIB), which we study throughout. Indeed, in the Gaussian (\(\alpha=2\)) case, the optimal stable linear encoder for LSIB coincides with the optimal encoder for the original IB problem, since the GIB admits an optimal representation which is linear. For \(\alpha<2\), however, stable linear encoders are not necessarily optimal for the original IB problem. We discuss their optimality in Section~\ref{sec:optimality_LSIB} and provide a negative answer. 
    
Before deriving the LSIB solution, we introduce the following quantities:
\begin{equation}
    \snr:= \frac{|k|^\alpha\gamma_X^\alpha}{\gamma_N^\alpha}\ge0, \quad 
    \rho:=\frac{\gamma_Y}{\gamma_X}\in(0,1). \label{eq:alphath_snr}%\tfrac didn't fix
\end{equation}

In the Gaussian case, these quantities reduce to the conventional signal-to-noise ratio between \(kX\) and \(T\) and the correlation coefficient between \(X\) and \(Y\), respectively. This motivates our terminology: we refer to \(\snr\) as the \emph{\(\alpha^{\rm th}\) signal-to-noise ratio} (\(\alpha^{\rm th}\)-snr) of the linear encoder and to \(\rho\) as the \emph{correspondence} between \(X\) and \(Y\).

\begin{theorem}[Solution to scalar LSIB]\label{thm:optimal_LSIB_scalar}
    A linear encoder \(T=kX+N\) with $N \sim \SaS(1)$ independent of $X$ and 
    \begin{equation}\label{eq:optimal_LSIB_scalar}
        |k|=\begin{cases}
            0, & \beta<\beta^c,\\\displaystyle \frac1{\gamma_X}\left(\frac{\beta\rho^\alpha-1}{1-\rho^\alpha}\right)^{1/\alpha}, & \beta\ge\beta^c,
        \end{cases}\quad \beta^c:=\rho^{-\alpha} > 1,
    \end{equation}
     is an optimal encoder for the LSIB.
\end{theorem}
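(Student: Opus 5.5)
We reduce the LSIB objective to a scalar function of the $\alpha^{\rm th}$-snr and minimize it by elementary calculus. For $T=kX+N$ with $N\sim\SaS(\gamma_N)$ independent of $X$, the stability property~\eqref{eq:stabilityproperty} gives $T\sim\SaS\big((|k|^\alpha\gamma_X^\alpha+\gamma_N^\alpha)^{1/\alpha}\big)$. Moreover, $h(cZ+D)=h(Z)+\log|c|$ for $Z\sim\SaS(1)$. Hence
\[
I(X;T)=h(T)-h(N)=\tfrac1\alpha\log\!\big(1+\snr\big).
\]
For the relevance term, write $T=kY+(kA+N)$. Here $kA+N$ is $\SaS$ with scale $(|k|^\alpha\gamma_A^\alpha+\gamma_N^\alpha)^{1/\alpha}$ and is independent of $Y$. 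Using $\gamma_A^\alpha=\gamma_X^\alpha(1-\rho^\alpha)$, we get
\[
I(Y;T)=\tfrac1\alpha\log\frac{1+\snr}{1+(1-\rho^\alpha)\snr}.
\]
Both quantities depend on $(k,\gamma_N)$ only through $s:=\snr\in[0,\infty)$. We may therefore fix $\gamma_N=1$ without loss of generality and optimize over $|k|$, equivalently over $s=|k|^\alpha\gamma_X^\alpha$. The case $k=0$ corresponds to $s=0$ and gives objective value $0$.

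The LSIB thus reduces to minimizing
\[
f(s)=\tfrac{1-\beta}{\alpha}\log(1+s)+\tfrac{\beta}{\alpha}\log\big(1+(1-\rho^\alpha)s\big),\qquad s\ge 0.
\]
Set $c:=1-\rho^\alpha\in(0,1)$. Then
\[
\alpha f'(s)=\frac{1-\beta}{1+s}+\frac{\beta c}{1+cs}.
\]
The sign of $f'$ is the sign of the numerator $N(s)=(1-\beta)(1+cs)+\beta c(1+s)$, which simplifies to
\[
N(s)=(1-\beta\rho^\alpha)+c\,s.
\]
This is affine and increasing in $s$, with root $s^*=(\beta\rho^\alpha-1)/(1-\rho^\alpha)$.

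If $\beta\rho^\alpha\le 1$, i.e.\ $\beta\le\beta^c=\rho^{-\alpha}$, then $N(s)\ge0$ on $[0,\infty)$. So $f$ is nondecreasing and is minimized at $s=0$, i.e.\ $k=0$. If $\beta>\beta^c$, then $f'<0$ on $[0,s^*)$ and $f'>0$ afterwards, so $s^*>0$ is the unique global minimizer. Solving $|k|^\alpha\gamma_X^\alpha=s^*$ yields the stated formula. At $\beta=\beta^c$ both branches give $|k|=0$, so the case split in~\eqref{eq:optimal_LSIB_scalar} is consistent. Finally, $\beta^c>1$ because $\rho<1$.

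The only delicate point is the first step: justifying the entropy scaling identity and finiteness of $h(\SaS(1))$, which we take from the appendix on $\SaS$ laws. We also need to note that $I(X;T)=h(T)-h(T|X)$ is legitimate because all the entropies involved are finite. The remaining work is a monotonicity analysis of a one-variable function. We should also check that the limit $s\to\infty$ gives no smaller value: there $f\to-\infty$ only when $\beta<1$, which cannot happen in the second case since $\beta>\beta^c>1$. In the first case $f$ is nondecreasing, so the limit is irrelevant there as well.
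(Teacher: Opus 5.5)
Your proof is correct and follows essentially the same route as the paper. You compute $I(X;T)$ and $I(Y;T)$ via the stability property as functions of the $\alpha^{\rm th}$-snr $s$, then show that the numerator $1-\beta\rho^\alpha+(1-\rho^\alpha)s$ of $f'$ is increasing with root $s^*=(\beta\rho^\alpha-1)/(1-\rho^\alpha)$, and split on $\beta$ versus $\beta^c$.

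One harmless slip: your closing claim that $f\to-\infty$ as $s\to\infty$ when $\beta<1$ is false. In fact $f(s)=\tfrac1\alpha\log(1+s)-\tfrac\beta\alpha\log\tfrac{1+s}{1+(1-\rho^\alpha)s}\to+\infty$ for every $\beta$. The remark is not needed anyway, since your sign analysis of $N(s)$ already gives global minimality on $[0,\infty)$.
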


%\textcolor{green}{Note that \(\rho=1\) implies that \(A=0\) is deterministic and \(\beta^c=1\). In this case, \(I(X;T)-\beta I(Y;T)=(1-\beta)I(X;T)\) and the best we can do, when \(\beta>\beta^c=1\), is to maximize \(I(X;T)\) which is possible by using a deterministic representation \(T=X\) with no compression, or by using linear encoders, taking the limit \(k\to\infty\).}
The optimal stable linear encoder presented above coincides indeed with the GIB solution when \(\alpha=2\). Moreover, similarly to the GIB, the notion of critical values \(\beta^c\) arises naturally in the LSIB setting as well. In particular, as \(\beta\) crosses the critical value \(\beta^c=\rho^{-\alpha}\), the optimal encoder transitions from the degenerate solution \(k=0\) to a nontrivial encoder with \(k\neq0\), corresponding to the activation of a representation component. We now present the proof of Theorem~\ref{thm:optimal_LSIB_scalar}.
 
 \begin{proof}
    Consider $T = k X + N$ with \(N \sim \SaS(\gamma_N)\), and define $Z \sim \SaS(1)$. Then, the mutual information quantities are given by
    \begin{align}
         &I(X;T)\nonumber\\
         =\,&h(T)-h(T\,|\,X) \nonumber\\
         =\,&h(kX+N)-h(N)\nonumber\\
=\,&h(Z)+\frac1\alpha\log\left(|k|^\alpha\gamma_X^\alpha+\gamma_N^\alpha\right)-h(Z)-\log(\gamma_N) \label{eq:stabpropuse}\\
=\,&\frac1\alpha\log(1+\snr),\label{eq:compression_rate_scalar}
\end{align}
and
\begin{align}
        &I(Y;T)\nonumber\\
        %=\,&h(T)-h(T\,|\,Y)\nonumber\\
        =\,&h(kX+N)-h(kA+N)\nonumber\\
        =\,&h(Z)+\frac{1}{\alpha}\log\left(|k|^\alpha\gamma_X^\alpha+\gamma_N^\alpha\right)-h(Z)\nonumber\\
        &\qquad \qquad \qquad \qquad -\frac{1}{\alpha}\log(|k|^\alpha\gamma_A^\alpha+\gamma_N^\alpha)\label{eq:stabpropuse1}\\
=\,&\frac1\alpha\log\left(\frac{1+|k|^\alpha\gamma_X^\alpha\gamma_N^{-\alpha}}{1+|k|^\alpha\gamma_X^\alpha\gamma_N^{-\alpha}(1-\gamma_Y^\alpha\gamma_X^{-\alpha})}\right)\nonumber\\
        =\,&\frac1\alpha\log\left(\frac{1+\snr}{1+\snr(1-\rho^\alpha)}\right),\label{eq:relevant_information_scalar}
    \end{align}
    where we used equation~\eqref{eq:stabilityproperty} to write equations~\eqref{eq:stabpropuse} and~\eqref{eq:stabpropuse1}. Since equations~\eqref{eq:compression_rate_scalar} and~\eqref{eq:relevant_information_scalar} are given in terms of \(\snr\), we minimize \(\mathcal{L}(\snr):=\mathcal L(T)=I(X;T)-\beta I(Y;T)\) over \(\snr\ge0\) %Since that \(\mathcal{L}(\snr)\in C^\infty(\mathbb R)\) \textcolor{red}{define this as a footnote}. When the tradeoff parameter \(\beta\) exceeds \(\beta^c=\rho^{-\alpha}\), solving for \(\frac{d\mathcal L}{d\snr}=0\) yields a unique solution \(\snr>0\) which achieves the global minimum. Otherwise, when \(\beta<\beta^c\), \(\snr=0\) is the unique global minimizer. Finally, we choose \(\gamma_N=1\) and the appropriate \(k\) to achieve the optimal \(\snr\). The detailed steps of the proof are provided in Appendix~\ref{app:proofs}.    
    by taking the first derivative 
and setting it to zero 
\begin{align}
\frac{1-\beta}{1+\snr}
+\frac{\beta(1-\rho^\alpha)}
{1+\snr(1-\rho^\alpha)}
&=0, \nonumber\\
\frac{1+\snr}
{1+\snr(1-\rho^\alpha)}
&=\frac{\beta-1}
{\beta(1-\rho^\alpha)}.
\label{eq:stationary}
\end{align}
Solving for \(\snr\) yields
\begin{equation}
\snr^*
=\frac{\beta\rho^\alpha-1}
{1-\rho^\alpha},
\label{eq:stationary_snr}
\end{equation}
where \(\snr^*\ge0\) denotes the point satisfying the
stationary equation whenever such a point exists.

To determine when a stationary point exists, define the left-hand
side of~\eqref{eq:stationary} as \(R(\snr)\),
%\[
%R(\snr):=
%\frac{1+\snr}
%{1+\snr(1-\rho^\alpha)}.
%\]
 which is strictly increasing over
\(\snr\ge0\), with range
\(\mathcal D=[1,(1-\rho^\alpha)^{-1})\). Hence, whenever a
solution to~\eqref{eq:stationary} exists, it is unique.

When \(\beta< \beta^c \, (=\rho^{-\alpha})\), the right-hand side of
\eqref{eq:stationary} satisfies
\(
\tfrac{\beta-1}{\beta(1-\rho^\alpha)}
\notin\mathcal D.
\)
Thus, no stationary point exists. In this case,
\(d\mathcal L/d\snr>0\) for all \(\snr\ge0\), and therefore
\(\snr=0\) is the unique global minimizer.

When \(\beta\ge\beta^c\), the value \(\snr^*\) in
\eqref{eq:stationary_snr} is a stationary point. To determine
its optimality, rewrite the derivative as
\begin{equation*}
\frac{d\mathcal L}{d\snr}
= \frac{1}{\alpha}
\frac{1-\beta\rho^\alpha
+\snr(1-\rho^\alpha)}
{(1+\snr)\bigl[1+\snr(1-\rho^\alpha)\bigr]}.
\end{equation*}

The denominator is positive for \(\snr\ge0\), while the numerator
is strictly increasing in \(\snr\) and vanishes uniquely at
\(\snr=\snr^*\). Hence, for \(\beta>\beta^c\), \(\snr^*\) is the unique global minimizer.
%\[
%\frac{d\mathcal L}{d\snr}<0
%\quad\text{for }0\le\snr<\snr^*,
%\quad
%\frac{d\mathcal L}{d\snr}>0
%\quad\text{for }\snr>\snr^*.
%\]
%Therefore, \(\snr^*\) is the unique global minimizer. 
At
\(\beta=\beta^c\), we have \(\snr^*=0\), and the derivative is
strictly positive for every \(\snr>0\), so \(\snr=0\) is again
the unique global minimizer.

Finally, one can achieve the optimal \(\snr\) by choosing \(\gamma_N=1\) and the appropriate value of \(k\). 
    
We note that \(\snr^*\) is a strictly increasing function of \(\beta\) when \(\beta\ge\beta^c\), which is expected: the more weight we assign to the relevant information preservation as opposed to information compression, the higher the required value of \(\snr\).
\end{proof}

\begin{remark}
 %When \(\alpha=2\), the optimal linear encoder as given in Theorem~\ref{thm:optimal_LSIB_scalar} indeed collapses to the optimal encoder of GIB given in~\cite{ChechikGal2005IbfG}. 
 In comparison to the GIB, one can identify  the corresponding analogous quantities for ``signal-to-noise ratio" and ``correlation", which we called ``\(\alpha^{\rm th}\)-snr" and the ``correspondence", respectively, that naturally arise from the expressions for mutual information \(I(X;T)\) and \(I(Y;T)\).
\end{remark}

\subsection{LSIB Information Curve}
The solution to the LSIB problem allows us to characterize the tradeoff between compression and relevant information. More precisely, we can determine the maximal relevant information \(I(Y;T)\) achievable for a given compression rate \(I(X;T)\) when using stable linear encoders. This tradeoff is described by the \emph{LSIB information curve}. 

When the tradeoff parameter is less than the critical value \(\beta<\beta^c=\rho^{-\alpha}\), the optimal encoder is degenerate with \(k=0\), and therefore this corresponds to the origin of the LSIB information curve where \(I(X;T)=I(Y;T)=0\). When \(\beta>\beta^c\), substituting the optimal \(\alpha^{\rm th}\)-snr from \eqref{eq:alphath_snr}, \eqref{eq:optimal_LSIB_scalar} into the mutual information expressions in \eqref{eq:compression_rate_scalar}, \eqref{eq:relevant_information_scalar} yields
\begin{align}
    I_\beta(X;T)
    &=\frac{1}{\alpha}
    \log\left(
        \frac{\rho^\alpha(\beta-1)}
        {1-\rho^\alpha}
\right),\label{eq:LSIB_information_curve_compression}\\
    I_\beta(Y;T)
    &=\frac{1}{\alpha}
    \log\left(
        \frac{\beta-1}
        {\beta(1-\rho^\alpha)}
    \right).\label{eq:LSIB_information_curve_relevant_information}
\end{align}

Hence, the LSIB information curve is fully characterized by the stable
parameter \(\alpha\) and the correspondence \(\rho\),
and can be traced parametrically by varying the tradeoff parameter \(\beta\) over \([\beta^c,\infty)\). As depicted in Figure~\ref{fig:LSIB}, the LSIB information curve is indeed increasing and concave, reflecting the decreasing returns in the relevant information obtained by increasing the amount of information retained about \(X\).

We note that as \(\beta\) increases, the encoder places progressively greater emphasis on preserving relevant information rather than compression, resulting in increasing values of both \(I(X;T)\) and \(I(Y;T)\). At the critical value \(\beta=\beta^c\), both mutual information are zero. Thus, the nontrivial portion of the information curve emerges continuously from the origin when \(\beta\) crosses the critical value. A similar behavior is observed in the GIB~\cite{ChechikGal2005IbfG}. 

In the limiting case, consider \(\beta\to\infty\), %In this limit, we aim to preserve the input data as much as possible without compressing it, 
for which \(
    I(X;T)\rightarrow\infty\) and
    \(I(Y;T)\rightarrow\frac{1}{\alpha}\log(\frac{1}{1-\rho^\alpha}) = I_{\mathrm{sat}}(\alpha,\rho).\)
The limiting value of the relevant information preserved \(I(Y;T)\) is precisely the mutual information between \(X\) and \(Y\): it is not unexpected that \(\lim_{\beta\to\infty}I_\beta(Y;T)=I(X;Y)\) 
shows that with infinite-rate, stable linear encoders can preserve all the information about \(Y\) contained in \(X\).

It is also useful to express the relevant information preserved \(I(Y;T)\) directly as a
function of the compression rate \(R:=I(X;T)\). Eliminating
\(\beta\) from~\eqref{eq:LSIB_information_curve_compression} and \eqref{eq:LSIB_information_curve_relevant_information} yields
\begin{equation}\label{eq:LSIB_information_curve_function_of_R}
    I(Y;T)
    =
    \frac{1}{\alpha}\log\left(\frac{e^{\alpha R}}{\rho^\alpha+(1-\rho^\alpha)e^{\alpha R}}\right).
\end{equation}
Thus, the LSIB information curve admits both a parametric
representation in terms of \(\beta\) and an explicit representation as
a function of the compression rate \(R\).

\begin{figure*}[!t]
    \centering

    \subfloat[]{
        \includegraphics[width=0.3\textwidth]{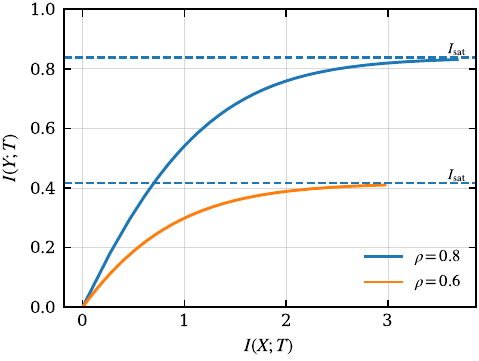}
        \label{fig:LSIB}
    }
    \hfill
    \subfloat[]{
        \includegraphics[width=0.3\textwidth]{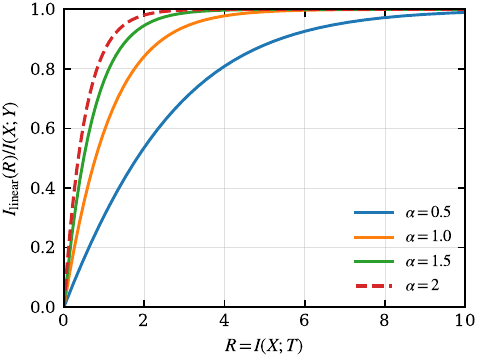}
\label{fig:fraction_relevant_information_preserved_linear_encoder_1}
    }
    \hfill
    \subfloat[]{
        \includegraphics[width=0.3\textwidth]{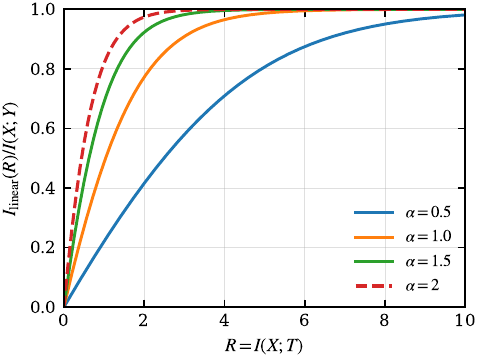}
\label{fig:fraction_relevant_information_preserved_linear_encoder_2}
    } 
    \caption{(a) Information curve plotting the mutual information as the tradeoff parameter \(\beta\) varies for \(\alpha=1.5, \rho=(0.6,0.8)\). (b) Fraction of relevant information preserved for \(\alpha=(2.0,1.5,1.0,0.5)\), \(\rho=0.3\). (c) \(\rho=0.7\).}
    \label{fig:three_cases}
\end{figure*}

%\begin{figure}[ht]
 %   \centering
%\includegraphics[width=0.7\columnwidth]{Figures/information_curve_rho_comparison.pdf}
 %   \caption{Information curve plotting the mutual information as the tradeoff parameter \(\beta\) varies for \(\alpha=1.5, \rho=(0.6,0.8)\).}
  %  \label{fig:LSIB}
%\end{figure}

\section{Non-Optimality of stable Linear Encoders}\label{sec:optimality_LSIB}
The LSIB only considers the class of stable linear encoders which are known to be optimal for GIB (\(\alpha=2\)). The optimality in the Gaussian case mainly relies on the tightness of the entropy power inequality (EPI) under Gaussian variables, and thus the proof technique is rather insightful only for the GIB (see, e.g.,~\cite{dikshtein2022informationbottleneckgaussianprocesses}). Thus, for $\alpha$-stable variables ($\alpha < 2$), the optimal stable linear encoder may be suboptimal. In the remainder of this section, we positively answer this question and quantify the performance of the LSIB and its gap from optimality. 

We note that this result, though seemingly natural, should not be taken for granted. As an example, it has been recently shown that the conditional mean estimators for the relevant data \(Y\) based on observed data \(X\) when both are jointly \(\SaS\), exist and are linear~\cite{Chouity2026Linear,11653965} similar to Gaussian case. It may be rather unexpected that stable linear encoders optimal for LSIB may be strictly suboptimal for the original IB problem.

\begin{theorem}
\label{th:suboptimal}
    Consider $(X,Y)$ two jointly $\SaS$ RVs as in~\eqref{eq:channelmodel}. When \(\beta>\beta^c=\rho^{-\alpha}\), any stable linear encoder \(T=kX+N\) with \(k\neq0\) and \(N\sim\SaS(\gamma_N)\) independent of \(X\) is strictly suboptimal for the general IB problem, i.e. it cannot achieve \(\min \,[I(X;T)-\beta I(Y;T)]\).
    %\(\min_{p_{T|X}}I(X;T)-\beta I(Y;T)\).
\end{theorem}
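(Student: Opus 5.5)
The plan is to use the first-order (self-consistency) characterization of IB minimizers and show that a stable linear encoder cannot satisfy it when \(\alpha<2\).

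\textbf{Step 1: the necessary condition.} Recall from~\cite{tishby2000informationbottleneckmethod} that any minimizer \(p_{T|X}\) of~\eqref{information_bottleneck_objective} is a stationary point of the Lagrangian over conditional laws. It therefore satisfies, for almost every \((x,t)\),
\begin{equation*}
p_{T|X}(t|x)=\frac{p_T(t)}{Z(x,\beta)}\exp\Bigl\{-\beta\, D\bigl(p_{Y|X=x}\,\|\,p_{Y|T=t}\bigr)\Bigr\}.
\end{equation*}
I would first justify that this condition applies in our continuous, heavy-tailed setting. To do so, perturb \(p_{T|X=x}\) by \(\varepsilon\) times bounded signed densities of zero mass, and check that \(I(X;T)\) and \(I(Y;T)\) are Gâteaux differentiable along such directions. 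This should follow because \(p_N\) is smooth, strictly positive, and has \(|t|^{-1-\alpha}\) tails, so all logarithms involved are integrable. If this step is too delicate, the fallback is to exhibit an explicit improving perturbation. One option is the family \(T_\varepsilon=kX+N+\varepsilon\varphi(X)\). Another is the mixture \(p_{T|X}+\varepsilon q\) with \(q\) chosen from the violation found in Step 3; one would then show that \(\frac{d}{d\varepsilon}\mathcal L\big|_{\varepsilon=0}\neq 0\).

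\textbf{Step 2: specialize to \(T=kX+N\).} In this case \(p_{T|X}(t|x)=p_N(t-kx)\), and \(p_T\) is the \(\SaS\) density of scale \((|k|^\alpha\gamma_X^\alpha+\gamma_N^\alpha)^{1/\alpha}\). The condition becomes the functional identity
\begin{equation*}
\log p_N(t-kx)-\log p_T(t)+\beta\, D\bigl(p_{Y|X=x}\,\|\,p_{Y|T=t}\bigr)=-\log Z(x,\beta)\quad\text{for all }(x,t).
\end{equation*}
Equivalently, the \(t\)-dependence on the left must cancel identically. As a sanity check, for \(\alpha=2\) all terms are quadratic and the cancellation occurs precisely at the \(k\) of Theorem~\ref{thm:optimal_LSIB_scalar}. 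Since any optimal encoder must in particular be optimal within the stable linear class, it suffices to treat that \(k\); nevertheless, the argument below does not depend on the particular value.

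\textbf{Step 3: contradiction via tail asymptotics for \(\alpha<2\).} Fix \(x\) and let \(|t|\to\infty\).

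The first two terms are easy to handle. Because of the power-law tails \(p(u)\sim c\gamma^\alpha|u|^{-1-\alpha}\), we get
\begin{equation*}
\log p_N(t-kx)-\log p_T(t)\to \log\frac{\gamma_N^\alpha}{|k|^\alpha\gamma_X^\alpha+\gamma_N^\alpha},
\end{equation*}
a constant independent of \(x\).

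The divergence term requires the behaviour of the posterior \(p_{Y|T=t}\) for large \(t\). Here the heavy-tail "single big jump" principle applies. Write \(T=kY+(kA+N)\). A large value of \(T\) is produced either by \(kY\) being large or by \(kA+N\) being large, with respective weights proportional to \(|k|^\alpha\gamma_Y^\alpha\) and \(|k|^\alpha\gamma_A^\alpha+\gamma_N^\alpha\). Hence \(p_{Y|T=t}\) converges to a mixture: a mass drifting to infinity near \(t/k\), plus \(p_Y\) itself. It follows that \(D(p_{Y|X=x}\|p_{Y|T=t})\) tends to a finite limit of the form \(D(p_{Y|X=x}\|p_Y)-\log w\), where \(w\) is the weight on the non-escaping component.

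Substituting these limits, \(\log Z(x,\beta)\) must equal an explicit affine function of \(D(p_{Y|X=x}\|p_Y)\). I would then compare this with the behaviour of the identity at \(t=kx\) as \(|x|\to\infty\), or equivalently differentiate the identity in \(t\) at moderate \(t\). On the linear side, the scores \(\partial_t\log p_N\) and \(\partial_t\log p_T\) decay like \(-(1+\alpha)/t\). The \(t\)-derivative of \(\beta D\), by contrast, involves the posterior mean of the score of \(p_{Y|T}\), and the bimodal posterior structure makes it behave differently. The mismatch between these rates, and in particular the non-quadratic, non-symmetric dependence on \((x,t)\), should rule out the identity for every \(k\neq0\) and \(\gamma_N\). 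This is where \(\alpha<2\) enters, since for \(\alpha=2\) there is no single-big-jump effect.

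\textbf{Main obstacle.} I expect the main obstacle to be Step 3: rigorously controlling the asymptotics of \(p_{Y|T=t}\) and of the KL divergence as \(|t|\to\infty\), with the error terms uniform enough to pass limits inside the integral. I would try first to use the known expansion of stable densities, \(p(u)=c\gamma^\alpha|u|^{-1-\alpha}\bigl(1+O(|u|^{-\alpha})\bigr)\) (Appendix~\ref{app:alpha-stable_probability_laws}), together with dominated convergence.
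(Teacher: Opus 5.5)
Your Steps 1--2 match the paper's starting point: the self-consistent equation of Tishby et al., specialized to \(p_{T|X}(t|x)=p_N(t-kx)\). The gap is in Step 3, which has to supply the whole contradiction and does not.

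The zeroth-order limit you compute as \(t\to\infty\) is fully consistent with the identity. It only pins down \(\log Z(x,\beta)\) in terms of \(D(p_{Y|X=x}\|p_Y)\), \(\gamma_N\), \(\gamma_T\) and \(w\). The steps after that (``compare at \(t=kx\) as \(|x|\to\infty\)'', ``differentiate at moderate \(t\)'', ``the mismatch between these rates'') are not an argument. The heuristic behind them is also misdirected. With \(M=kA+N\),
\(\partial_t D(p_{Y|X=x}\|p_{Y|T=t})=s_T(t)-\int p_{Y|X}(y|x)\,s_M(t-ky)\,dy\).
This is an average against \(p_{Y|X=x}\), whose mass near \(y=t/k\) is only \(O(t^{-2-2\alpha})\). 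So the bimodal shape of the posterior \(p_{Y|T=t}\) plays no role, and the encoder and relevance terms decay at exactly the same rates. The obstruction is a mismatch of coefficients, not of rates. A clear symptom is that your sketch never uses \(\beta>\beta^c\). At \(\beta=\rho^{-\alpha}\) the leading coefficients cancel, so an argument that never sees \(\beta\) cannot close.

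The missing idea is to expand the \(t\)-derivative to its first non-cancelling order and show that the coefficient there is proportional to \(\beta\rho^\alpha-1\). The paper does this as follows:
\begin{itemize}
\item Subtract the log-identity at two inputs \(x_1\neq x_2\). This eliminates \(p_T(t)\) and \(Z\), and forces \(\xi(t,x_1,x_2)\) to be constant in \(t\).
\item Differentiate: \(\partial_t\xi=s_N(t-kx_1)-s_N(t-kx_2)-\beta\int\Delta(y,x_1,x_2)\,s_M(t-ky)\,dy\). The \(s_T\) term drops because \(\int\Delta\,dy=0\).
\item Expand asymptotically using:
  \begin{itemize}
  \item the scale-free tail \(s(u)=-(1+\alpha)/u+O(|u|^{-1-\alpha})\) for both \(s_N\) and \(s_M\);
  \item a Taylor expansion of \(s_M(t-ky)\) on \(|y|\le t^{\delta}\);
  \item the bound \(|\Delta|\le C(1+|y|)^{-2-2\alpha}\) elsewhere;
  \item the linear conditional mean \(\int y\,\Delta\,dy=\rho^\alpha(x_1-x_2)\).
  \end{itemize}
\item Conclude \(t^2\partial_t\xi\to(1+\alpha)k(x_1-x_2)(\beta\rho^\alpha-1)\neq0\).
\end{itemize}

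Your single-\(x\) identity could be pushed to a conclusion in the same way. There, however, the \(O(|u|^{-1-\alpha})\) corrections to the scores of \(N\), \(T\) and \(M\) do not cancel. They must be computed from the second term of the stable tail expansion, and they also come out proportional to \(\beta\rho^\alpha-1\). Differencing in \(x\) removes them and leaves a clean \(t^{-2}\) term for every \(\alpha\in(0,2)\).

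As written, your proposal identifies neither the order at which the contradiction appears nor the quantity \(\beta\rho^\alpha-1\) that makes it nonzero.
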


\begin{proof}
    Since $X$ is symmetric, we assume, Without Loss of Generality (WLoG), \(k>0\). An optimal encoder with conditional law \(p_{T|X}\) for the general IB problem satisfies the stationary condition given in~\cite{tishby2000informationbottleneckmethod}:
\begin{equation}\label{eq:selfconsistent}
      p_{T|X}(t| x)=\frac{p_T(t)}{Z(x,\beta)}\exp\!\left\{\!-\beta
      D_{\mathrm{KL}}(p_{Y|X}(y|x)\,\|\,p_{Y|T}(y| t))\right\},
    \end{equation}
    where $Z(x,\beta)$ is the normalization term. Assume, for the sake of contradiction, that the linear encoder \(T=kX+N\) is optimal, and satisfies~\eqref{eq:selfconsistent}. Define for any \(x_1\neq x_2\), the function
    \begin{align} \xi(t,x_1,x_2)&:=\log p_N(t-kx_1)-\log p_N(t-kx_2)-\beta\int_\mathbb R\Delta(y,x_1,x_2) \log p_{Y|T}(y|t)\,dy, \label{eq:Xi}
    \end{align}
    where
    \begin{eqnarray}
\Delta(y,x_1,x_2)\!\!\!&:=&\!\!\!p_{Y|X}(y|x_1)-p_{Y|X}(y|x_2) \nonumber\\
\!\!\!&=&\!\!\!p_Y(y)\left[\frac{p_A(x_1-y)}{p_X(x_1)}-\frac{p_A(x_2-y)}{p_X(x_2)}\right] \label{eq:deltay}
    \end{eqnarray}
    
We show in Lemma~\ref{lem:stationarity} in Appendix~\ref{app:proofs}, that if the stable linear encoder satisfies~\eqref{eq:selfconsistent}, then 
the function $\xi_{\mathrm{lin,opt}}(t,x_1,x_2)$ must be constant in $t$, and hence \(\partial_t\xi_{\mathrm{lin,opt}}(t,x_1,x_2)=0\) for all \(t\). However, this is contradicted by the following asymptotic result for the stable linear encoder given in Lemma~\ref{lem:stationarity-asymptotics} in Appendix~\ref{app:proofs}
    \begin{equation}
    \lim_{t\to\infty}t^{2}\partial _t\xi_{\mathrm{lin}}(t,x_1,x_2)
      %&=-(1+\alpha)k(x_1-x_2)+\beta(1+\alpha)k\rho^\alpha(x_1-x_2)\nonumber\\
      =(1+\alpha)k(x_1-x_2)\bigl[\beta\rho^\alpha-1\bigr]\nonumber
    \end{equation}
    which is nonzero since \(\alpha, k >0\), \(x_1\neq x_2\), and \(\beta>\beta^c=\rho^{-\alpha}\). Therefore, the stable linear encoder fails the necessary condition for optimality of encoders \(p_{T|X}(t|x)\) and is therefore strictly suboptimal.\footnote{Note that, for the Gaussian case, \(\partial_t\xi_{\mathrm{lin}}(t,x_1,x_2)=k(x_1-x_2)[1/\sigma_N^2-\beta\rho^2/\sigma_M^2]\) where \(\sigma_M^2:=k^2\sigma_N^2+\sigma_A^2\). Using the optimal stable linear encoder of Theorem~\ref{thm:optimal_LSIB_scalar}, \(\sigma_M^2=2\beta\rho^2\) and \(\sigma_N^2=2\gamma_N^2=2\), so \(\partial_t\xi_{\rm lin}(t,x_1,x_2)=0\) for all \(t\) and thus satisfies the necessary condition of optimality.}
\end{proof}

Since stable linear encoders are provably not optimal when $\alpha < 2$, we wish to assess their performance loss. To that effect, we consider the constrained optimization formulation given by equation~\eqref{eq:consoptim}. For a constrained compression rate \(I(X;T)\le R\), we attempt to quantify sub-optimality in terms of the information preservation of the relevance variable \(Y\). Define
\begin{equation*}
\begin{aligned}
\mathrm{OPT}(R)
=\sup\bigl\{ I(Y;T):\;& I(X;T)\le R,\\
& T \text{ is a stochastic encoder of } X \bigr\}.
\end{aligned}
\end{equation*}
 Recall that with an optimal stable linear encoder \(T_L\), the maximum \(I(Y;T_L)\) denoted \(I_{\rm linear}(R)\) attained for a fixed compression rate \(I(X;T_L)=R\) is given in equation~\eqref{eq:LSIB_information_curve_function_of_R}, which is rewritten as follows:
\begin{equation*}
    I_{\rm linear}(R)=\frac{1}{\alpha}\log\left(\frac{1+s}{1+s(1-\rho^\alpha)}\right)
\end{equation*}
where \(s:=e^{\alpha R}-1\).
Define the \emph{relevant information loss} that results from using the stable linear encoder as
\begin{equation*}
    \ell(R):=\mathrm{OPT}(R)-I_{\rm linear}(R)\ge0,
\end{equation*}
to which we provide an upper bound in the following corollary.
\begin{corollary}\label{thm:gap_bound}
    At a fixed compression rate \(I(X;T)=R\), the relevant information loss $\ell(R)$ due to the optimal stable linear encoder is upperbounded by
    \begin{equation*}
        \ell(R) \leq \frac{1}{\alpha}\log\left[\frac{1+s(1-\rho^\alpha)}{(1-\rho^\alpha)(1+s)}\right]
    \end{equation*}
    where \(s:=e^{\alpha R}-1\).
\end{corollary}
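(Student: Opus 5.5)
The plan is to bound $\mathrm{OPT}(R)$ from above by the trivial data-processing bound and then subtract the closed-form linear curve. By the Markov chain $T\rightarrow X\rightarrow Y$, every stochastic encoder satisfies $I(Y;T)\le I(X;Y)$. Hence
\[
\mathrm{OPT}(R)\le I(X;Y)
\]
for every $R$.

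First I would compute $I(X;Y)$ for the model~\eqref{eq:channelmodel}. Write $I(X;Y)=h(X)-h(X\,|\,Y)=h(X)-h(A)$. With $Z\sim\SaS(1)$, the stability property~\eqref{eq:stabilityproperty} and the scaling relation $h(\gamma Z)=h(Z)+\log\gamma$ give
\[
I(X;Y)=\log\gamma_X-\log\gamma_A=\frac1\alpha\log\frac{\gamma_X^\alpha}{\gamma_X^\alpha-\gamma_Y^\alpha}=\frac1\alpha\log\frac{1}{1-\rho^\alpha}.
\]
This is exactly the saturation value $I_{\mathrm{sat}}(\alpha,\rho)$ already identified in the LSIB information-curve discussion. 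Alternatively, the same value follows by letting $\snr\to\infty$ in~\eqref{eq:relevant_information_scalar} and noting that $I(Y;T)\le I(X;Y)$ for every $T$. Either route should be routine; the only technical point is finiteness of $h(Z)$ for stable laws, which I would take from the appendix on stable properties.

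Second, I subtract the linear curve. Since $I_{\rm linear}(R)$ is attained at $I(X;T_L)=R$, it is a valid lower bound for $\mathrm{OPT}(R)$, so $\ell(R)\ge0$ holds. Using the rewritten form of $I_{\rm linear}(R)$ with $s=e^{\alpha R}-1$,
\[
\ell(R)\le \frac1\alpha\log\frac{1}{1-\rho^\alpha}-\frac1\alpha\log\frac{1+s}{1+s(1-\rho^\alpha)}=\frac1\alpha\log\left[\frac{1+s(1-\rho^\alpha)}{(1-\rho^\alpha)(1+s)}\right],
\]
which is the claimed bound.

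I would close with a remark that the bound tends to $0$ as $R\to\infty$, since the bracket tends to $1$. This matches the asymptotic-optimality claim made in the introduction.

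There is no real obstacle here. The only care needed is rigor in the mutual-information identity $I(X;Y)=h(X)-h(A)$, which requires finite differential entropies of the stable laws involved.
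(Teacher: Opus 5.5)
Your proposal is correct and follows the paper's proof: both bound $\mathrm{OPT}(R)$ by $I(X;Y)=\frac{1}{\alpha}\log\frac{\gamma_X^\alpha}{\gamma_A^\alpha}=-\frac{1}{\alpha}\log(1-\rho^\alpha)$ using the data processing inequality and the stability property, then subtract $I_{\rm linear}(R)$. Your closing observation that the bracket tends to $1$ as $R\to\infty$, so the bound vanishes, matches the paper's subsequent discussion.
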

\begin{proof}
    Since \(Y\rightarrow X\rightarrow T\) is a Markov chain, then by the data processing inequality \(I(Y;T)\le I(Y;X)=\frac{1}{\alpha}\log\frac{\gamma_X^\alpha}{\gamma_A^\alpha}=-\frac{1}{\alpha}\log(1-\rho^\alpha)\), and the bound follows. 
\end{proof}

%We can normalize the relevant information lost by \(I(X;Y)\) and view the relative relevant information loss instead:
%\begin{align}
 %   \frac{\mathrm{gap}(R)}{I(X;Y)}&\le1-\frac{I_{\rm linear}(R)}{I(X;Y)}\\
  %  &=-\frac{\log\left(1+\frac{\rho^\alpha}{1-\rho^\alpha}e^{-\alpha R}\right)}{\log(1-\rho^\alpha)}
%\end{align}
We plot in Figures~\ref{fig:fraction_relevant_information_preserved_linear_encoder_1} and~\ref{fig:fraction_relevant_information_preserved_linear_encoder_2}, a lowerbound on the fraction of relevant information preserved at a rate \(R\)
\begin{equation}
\label{eq:gap}
   \frac{I_{\rm linear}(R)}{\mathrm{OPT}(R)} \geq  \frac{I_{\rm linear}(R)}{I(X;Y)}
    =\frac{\log\left(1-\rho^\alpha+\rho^\alpha e^{-\alpha R}\right)}{\log(1-\rho^\alpha)},
\end{equation}
 for different values of \(\alpha\) and \(\rho\). As depicted, the lower bound converges to $1$ at large compression rate $R$. Indeed, as implied by equation~\eqref{eq:gap}, the gap is upperbounded by  
 \[1 -  \frac{I_{\rm linear}(R)}{\mathrm{OPT}(R)} \leq -\frac{\rho^{\alpha}}{(1 - \rho^\alpha)\log(1-\rho^\alpha)} e^{-\alpha R},\]
 which exponentially decays to $0$ with $R$.  Moreover, we observe that at a fixed \(R\) the fraction of relevant information retained decreases as the tails of the $\SaS$ density become heavier (i.e., the stability index \(\alpha\) is smaller). This is intuitive since the heavier the variable, the ``harder" it is to compress it.  
 
 Additionally, Figures~\ref{fig:fraction_relevant_information_preserved_linear_encoder_1} and~\ref{fig:fraction_relevant_information_preserved_linear_encoder_2} show that the fraction of relevant information preserved increases as the correspondence parameter \(\rho\) decreases and the stability index \(\alpha\) approaches \(2\). Consequently, in these regimes, most of the relevant information can be retained at moderate rates \(R\), with only a small corresponding loss \(\ell(R)\).

%\begin{figure}[ht]
 %   \centering
  %  \includegraphics[width=0.5\textwidth]{Figures/information_fraction_rho_03.pdf}
   % \caption{Fraction of relevant information preserved for \(\alpha=(2.0,1.5,1.0,0.5)\), \(\rho=0.3\).}
%\label{fig:fraction_relevant_information_preserved_linear_encoder}
%\end{figure}

%\begin{figure}[ht]
 %   \centering
  %  \includegraphics[width=0.5\textwidth]{Figures/information_fraction_rho_07.pdf}
   % \caption{Fraction of relevant information preserved for \(\alpha=(2.0,1.5,1.0,0.5)\), \(\rho=0.7\).}
%\label{fig:fraction_relevant_information_preserved_linear_encoder}
%\end{figure}

\section{Conclusion}
In this work, we studied the IB problem for jointly $\alpha$-stable variables related through an additive model. We showed that unlike the jointly Gaussian case, stochastic stable linear encoders are not optimal in general. Moreover, we investigated the performance of the best linear encoders of the form $T = k X + N$, where $N$  is a stable RV by evaluating the LSIB in closed-form and identifying its critical value $\beta^c$. The LSIB expression reduces to the optimal solution for the jointly Gaussian problem as a special case. For the $\alpha$-stable setup, it is shown to behave asymptotically optimal at large rate $R$.

\appendices
\section{\(\alpha\)-Stable Probability Laws}\label{app:alpha-stable_probability_laws}
Alpha-stable distributions are limiting distributions of normalized IID sums under the Generalized Central Limit Theorem~\cite{kolmo1968}. They are heavy-tailed, power-law variables with infinite variance, and even some have non-finite expectations. Next, we present some useful properties.
%\begin{definition}
 %   A random variable \(X\) is said to be \(\alpha\)-\emph{stable} if for \(X_1\) and \(X_2\) independent copies of \(X\) and any positive constants \(a\) and \(b\),
  %  \begin{equation}
   %     aX_1+bX_2\overset d= cX+d
   % \end{equation}
    %for some positive \(c\) and some \(d\in\mathbb{R}\).
%\end{definition}
\begin{definition}~\cite{NolanJohnP2020USDM}.
    A random variable \(X\) is $\alpha$-stable iff \(X \overset d= aZ+b\), where \(0<\alpha<2,-1\le\beta\le1,a\neq0,b\in\mathbb{R}\) and \(Z\) is a random variable with the characteristic function
    \begin{align*}
\label{def:standardized_stable}
    \phi(\theta)&=\mathbb{E}[e^{i\theta Z}]\nonumber\\
    &=\begin{cases}
            \exp \left\{-|\theta|^\alpha\left[1-i\beta\tan\frac{\pi\alpha}{2}(\mathrm{sgn}\,\theta)\right]\right\}, & \alpha\neq1, \\
            \exp \left\{-|\theta|\left[1+i\beta\frac{2}{\pi}(\mathrm{sgn}\,\theta)\log|\theta|\right]\right\}, & \alpha=1.
        \end{cases}
    \end{align*}

   Using the standard parametrization, we denote \(X\) is \(S(\alpha,\beta,\gamma,\delta)\) if
    \begin{equation*}
        X\overset d= \begin{cases}
            \gamma Z +\delta & \alpha\neq1\\
            \gamma Z + (\delta + \beta\frac2{\pi}\gamma \log \gamma) & \alpha=1.
        \end{cases}
    \end{equation*}
     Hence, \(X\) has the characteristic function
    \begin{align*}
&\phi(\theta)\\
&=\begin{cases}
            \exp \left\{-\gamma^\alpha|\theta|^\alpha\left[1-i\beta\tan\frac{\pi\alpha}{2}(\mathrm{sgn}\,\theta)\right]+i\delta \theta\right\}, & \alpha\neq1, \\
            \exp \left\{-\gamma|\theta|\left[1+i\beta\frac{2}{\pi}(\mathrm{sgn}\,\theta)\log|\theta|\right]+i\delta\theta\right\}, & \alpha=1.
            \end{cases}
    \end{align*}
\end{definition}
%The usual restrictions on the parameters are imposed: \(\alpha\in(0,2],\beta\in[-1,1],\gamma\in(0,+\infty),\text{ and }\delta\in\mathbb{R}\). If \(\gamma=0\), we consider it to be a degenerate random variable which is deterministic. Note that the parameter \(k\) describes how the family of \(\alpha\)-stable laws are parametrized (see \cite{NolanJohnP2020USDM}), and we will only rely on the parametrization \(k=1\). So, we can drop the parametrization \(k\) and write \(\mathbf S(\alpha,\beta,\gamma,\delta)\) for brevity.

If \(\beta=0\) and \(\delta=0\), we say that \(X\) is symmetric \(\alpha\)-stable and we denote \(X\sim\SaS(\gamma)\). Its characteristic function is equal to
\begin{equation}
\phi(\theta) = e^{-\gamma^\alpha|\theta|^\alpha}, \quad 0 < \alpha < 2.
\label{eq:charsym}
\end{equation}

\begin{property}[Series Expansion]\label{prop:series_expansion}
The density of \(X\sim\SaS(\gamma)\) admits the following asymptotic series expansion (\(x\to+\infty\)) (see \cite{NolanJohnP2020USDM},\cite{zolotarev_one-dimensional_1986}):
\begin{equation}
\label{eq:expansion}
p_X(x)\sim\frac1\pi\sum_{n=1}^{\infty}\frac{(-1)^{n+1}}{n!}\Gamma(n\alpha+1)\sin\bigl(\tfrac{n\pi\alpha}{2}\bigr)\gamma^{n\alpha}x^{-n\alpha-1}.
\end{equation}
%For the first derivative of the density, the asymptotic series expansion \((x\to\infty)\) is given as follows:
%\begin{equation}
%p_X'(x)\sim\frac1\pi\sum_{n=1}^\infty\frac{(-1)^n}{n!}\Gamma(n\alpha+2)\sin\!\Big(\frac{n\pi\alpha}2\Big)\gamma^{n\alpha}x^{-n\alpha-2}.
%\end{equation}
The asymptotic series expansion \((x\to+\infty)\) of the \(k^{\rm th}\) derivative of the density, $k \geq 1$, is
\begin{equation*}
p_X^{(k)}(x)\sim\sum_{n=1}^{\infty} c_{k,n}x^{-n\alpha-k-1},
\end{equation*}
where
\begin{equation*}
c_{k,n}:=\frac{(-1)^{k}}\pi\frac{(-1)^{n+1}\gamma^{n\alpha}}{n!}\,\Gamma(n\alpha+k+1)\,\sin\!\Big(\frac{n\pi\alpha}{2}\Big).
\end{equation*}
\end{property}
The asymptotic expansion as \(x\to-\infty\) may be easily obtained from the fact that \(p_X^{(k)}\) is even (odd) when \(k\) is even (odd).
\begin{proof}
The expansion in~\eqref{eq:expansion} is obtained from the Fourier representation of the density function given in equation~\eqref{eq:charsym}
\begin{align*}
p_X(x)&=\frac{1}{2\pi}\int_{-\infty}^\infty e^{-\gamma^\alpha|\theta|^\alpha}e^{-i\theta x}\,d\theta\\
&=\frac1{\pi}\int_0^\infty e^{-\gamma^\alpha \theta^\alpha} \cos(\theta x)\,d\theta.
\end{align*}

Differentiating the Fourier representation in~\eqref{eq:charsym}
under the integral sign $k$ times, which is justified by dominated
convergence, yields
\begin{align}
p_X^{(k)}(x)
&=\frac{1}{2\pi}\int_{-\infty}^{\infty}
(-i\theta)^k e^{-\gamma^\alpha|\theta|^\alpha}
e^{-i\theta x}\,d\theta \nonumber\\
&=\frac{1}{\pi}\Re\left\{
i^k\int_0^\infty
r^k e^{-\gamma^\alpha r^\alpha}e^{irx}\,dr
\right\}.
\label{eq:pk-fourier}
\end{align}

Let
\[
f(z):=z^k e^{-\gamma^\alpha z^\alpha}.
\]
Fix
\[
0<\psi_0<
\min\left\{\frac{\pi}{2},\frac{\pi}{2\alpha}\right\},
\]
and take $z^\alpha$ on its principal branch. Then $f$ is analytic in
$0<\arg z<\psi_0$ and continuous up to the boundary away from the
origin. For $z=Re^{i\psi}$, $0\le\psi\le\psi_0$, and $x>0$,
\begin{align}
\left|e^{ixz}\right|
&=e^{-xR\sin\psi}\le 1,
\label{eq:bd1}\\
|f(z)|
&=R^k e^{-\gamma^\alpha R^\alpha\cos(\alpha\psi)}
\nonumber\\
&\le
R^k e^{-\gamma^\alpha R^\alpha\cos(\alpha\psi_0)},
\label{eq:bd2}
\end{align}
where $\cos(\alpha\psi_0)>0$.

To account for the branch point at the origin when $\alpha$ is
noninteger, consider first the truncated sector contour consisting of
the segment $[\varepsilon,R]$, the arc
$\{Re^{i\psi}:0\le\psi\le\psi_0\}$, the ray from
$Re^{i\psi_0}$ to $\varepsilon e^{i\psi_0}$, and the inner arc
$\{\varepsilon e^{i\psi}:0\le\psi\le\psi_0\}$. By Cauchy's integral
theorem,
\[
\oint f(z)e^{ixz}\,dz=0.
\]
Using~\eqref{eq:bd1}--\eqref{eq:bd2}, the contribution of the outer
arc satisfies
\begin{equation*}
\left|
\int_0^{\psi_0}
iRe^{i\psi}f(Re^{i\psi})
e^{ixRe^{i\psi}}\,d\psi
\right|
\le
\psi_0 R^{k+1}
e^{-\gamma^\alpha R^\alpha\cos(\alpha\psi_0)}
\to0
\end{equation*}
as $R\to\infty$. The contribution of the inner arc is
$O(\varepsilon^{k+1})$ and therefore vanishes as
$\varepsilon\to0$. Hence,
\begin{equation}
\int_0^\infty f(r)e^{ixr}\,dr
=
e^{i\psi_0}
\int_0^\infty
f(re^{i\psi_0})
e^{ixre^{i\psi_0}}\,dr.
\label{eq:RHS}
\end{equation}

We next evaluate the right-hand side of~\eqref{eq:RHS} using the
generalized Watson Lemma~\cite{Wyman1964Laplace}, \cite{wong_asymptotic_1989}. Around the origin,
\[
f(r)
=
\sum_{n=0}^\infty
\frac{(-1)^n\gamma^{n\alpha}}{n!}
r^{n\alpha+k}
=
\sum_{n=0}^\infty a_n r^{\lambda_n-1},
\]
where
\[
a_n:=\frac{(-1)^n\gamma^{n\alpha}}{n!},
\qquad
\lambda_n:=n\alpha+k+1.
\]
Along the rotated ray,
\[
e^{i\psi_0}f(re^{i\psi_0})
=
\sum_{n=0}^\infty
a_n e^{i\lambda_n\psi_0}r^{\lambda_n-1}.
\]
Moreover,
\(
e^{ixre^{i\psi_0}}
=
e^{-z r}\), where \(z:=-ixe^{i\psi_0}\),
and
\[
|\arg z|
=
\frac{\pi}{2}-\psi_0<\frac{\pi}{2}.
\]
The conditions of the generalized Watson Lemma therefore hold, and
as $x\to\infty$,
\begin{align}
\int_0^\infty f(r)e^{ixr}\,dr
&\sim
\sum_{n=0}^\infty
a_n e^{i\lambda_n\psi_0}
\Gamma(\lambda_n)z^{-\lambda_n}
\nonumber\\
&=
\sum_{n=0}^\infty
a_n\Gamma(\lambda_n)
x^{-\lambda_n}e^{i\lambda_n\pi/2}.
\label{eq:watson-k}
\end{align}

Substituting~\eqref{eq:watson-k} into~\eqref{eq:pk-fourier} gives
\[
p_X^{(k)}(x)
\sim
\frac{1}{\pi}
\sum_{n=0}^\infty
a_n\Gamma(\lambda_n)x^{-\lambda_n}
\Re\left\{
i^k e^{i\lambda_n\pi/2}
\right\}.
\]
Since
\begin{align*}
i^k e^{i\lambda_n\pi/2}
&=
e^{ik\pi/2}
e^{i(n\alpha+k+1)\pi/2}\\
&=
(-1)^k i e^{in\alpha\pi/2},
\end{align*}
we have
\[
\Re\left\{
i^k e^{i\lambda_n\pi/2}
\right\}
=
(-1)^{k+1}
\sin\left(\frac{n\pi\alpha}{2}\right).
\]
The $n=0$ term therefore vanishes, and
\begin{equation*}
p_X^{(k)}(x) \sim
\!\sum_{n=1}^\infty
\frac{(-1)^{n+k+1}\gamma^{n\alpha}}{\pi n!}
\Gamma(n\alpha+k+1)
\sin\left(\frac{n\pi\alpha}{2}\right)
x^{-n\alpha-k-1},
\end{equation*}
as $x\to+\infty$.
\end{proof}

\begin{proposition}[Tail expansion of the score]\label{prop:score}
    Let \(X\sim\SaS(\gamma)\). Define the \emph{score} function for the density of \(X\) as \(s_X(u):=\frac{d}{du}\log p_X(u)=\frac{p_X'(u)}{p_X(u)}\).
    The function \(\zeta_X(u):=s_X(u)+\frac{1+\alpha}{u}\) satisfies, for \(|u|\ge1\),
    \begin{equation}\label{eq:Phi-bounds}
\begin{aligned}
|\zeta_X(u)|
&\le C_X |u|^{-1-\alpha}, \qquad
|\zeta_X'(u)|
\le C_X |u|^{-2-\alpha},\\
|\zeta_X''(u)|
&\le C_X |u|^{-3-\alpha},
\end{aligned}
\end{equation}
for some \(C_X<\infty\). Equivalently,
\begin{equation}\label{eq:score-exp}
\begin{aligned}
s_X(u)
&=-\frac{1+\alpha}{u}
+O(|u|^{-1-\alpha}),\\
s_X'(u)
&=\frac{1+\alpha}{u^2}
+O(|u|^{-2-\alpha}),\\
s_X''(u)
&=-\frac{2(1+\alpha)}{u^3}
+O(|u|^{-3-\alpha}).
\end{aligned}
\end{equation}
    Moreover, \(\sup_u|s_X(u)|<\infty\) and \(|s_X(u)|\le C/|u|\) for \(|u|\ge1\).
\end{proposition}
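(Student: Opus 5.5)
We propose to derive everything from Property~\ref{prop:series_expansion}, applied to $p_X$, $p_X'$, $p_X''$ and $p_X'''$ as $u\to+\infty$. The case $u\to-\infty$ then follows from symmetry, because $s_X$ is odd and $\zeta_X$ is odd. Write the leading coefficients as
\[
p_X(u)=c_{0,1}u^{-1-\alpha}\bigl(1+r_0(u)\bigr),\qquad p_X^{(k)}(u)=c_{k,1}u^{-1-\alpha-k}\bigl(1+r_k(u)\bigr),
\]
where $r_k(u)=O(u^{-\alpha})$ comes from the second term of the expansion. Here $c_{0,1}=\frac1\pi\Gamma(\alpha+1)\sin(\pi\alpha/2)\gamma^\alpha>0$. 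We have $c_{1,1}=-(1+\alpha)c_{0,1}$, because $\Gamma(\alpha+2)=(1+\alpha)\Gamma(\alpha+1)$ and the sign flips. Similarly $c_{2,1}=(1+\alpha)(2+\alpha)c_{0,1}$ and $c_{3,1}=-(1+\alpha)(2+\alpha)(3+\alpha)c_{0,1}$.

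\textbf{Score expansion.} Taking ratios gives
\[
s_X(u)=\frac{p_X'}{p_X}=-\frac{1+\alpha}{u}\cdot\frac{1+r_1}{1+r_0}=-\frac{1+\alpha}{u}+O(u^{-1-\alpha}),
\]
which is the bound on $\zeta_X$. For the derivatives we use
\[
s_X'=\frac{p_X''}{p_X}-s_X^2,\qquad s_X''=\frac{p_X'''}{p_X}-3\frac{p_X''p_X'}{p_X^2}+2s_X^3 .
\]
Each ratio $p_X^{(j)}/p_X$ equals its leading monomial times $(1+O(u^{-\alpha}))$. So $s_X'=\bigl[(1+\alpha)(2+\alpha)-(1+\alpha)^2\bigr]u^{-2}+O(u^{-2-\alpha})=(1+\alpha)u^{-2}+O(u^{-2-\alpha})$. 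The same bookkeeping gives the coefficient of $u^{-3}$ in $s_X''$:
\[
-(1+\alpha)(2+\alpha)(3+\alpha)+3(1+\alpha)^2(2+\alpha)-2(1+\alpha)^3=-2(1+\alpha).
\]
This is consistent with differentiating $-(1+\alpha)/u$ twice. These three estimates are exactly \eqref{eq:score-exp}, and subtracting the leading terms gives \eqref{eq:Phi-bounds} for $u\ge U_0$.

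\textbf{Main obstacle: making the asymptotic series quantitative.} The expansion in Property~\ref{prop:series_expansion} is only asymptotic. What we need is a remainder estimate: $|p_X^{(k)}(u)-c_{k,1}u^{-1-\alpha-k}|\le C u^{-1-2\alpha-k}$ for $u\ge U_0$. This is what the generalized Watson lemma provides after truncating at $n=1$, since it gives an error of the order of the next term. If $\sin(\pi\alpha)=0$ so that the $n=2$ coefficient vanishes, the remainder is even smaller, which is harmless. We also need $1+r_0(u)\ge 1/2$ for $u\ge U_0$ so that the divisions are controlled. On the compact range $1\le u\le U_0$, the functions $p_X$, $p_X'$, $p_X''$, $p_X'''$ are continuous and $p_X>0$. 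Positivity holds because a stable density is positive everywhere; it is unimodal and strictly positive. Hence $\zeta_X,\zeta_X',\zeta_X''$ are bounded there, and enlarging $C_X$ covers $1\le|u|\le U_0$.

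\textbf{Global bounds.} For $\sup_u|s_X(u)|<\infty$, on $|u|\ge1$ the bound $|s_X|\le (1+\alpha)/|u|+C_X|u|^{-1-\alpha}\le C/|u|$ already follows from the above. On $|u|\le1$, $s_X=p_X'/p_X$ is continuous because $p_X$ is smooth and bounded below by a positive constant on compacts. Therefore $s_X$ is bounded there too, which completes the proof.
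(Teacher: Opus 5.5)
Your proposal is correct and follows essentially the same route as the paper's proof. Both take the leading-order terms of $p_X^{(k)}$, $k=0,\dots,3$, from Property~\ref{prop:series_expansion} with relative remainders $O(u^{-\alpha})$, use the identities $s_X'=p_X''/p_X-s_X^2$ and $s_X''=p_X'''/p_X-3(p_X''/p_X)s_X+2s_X^3$, rely on continuity and positivity of $p_X$ on the compact range $[1,U_0]$, and use parity for $u\le -1$. Your two extra remarks only make precise what the paper leaves implicit: that the truncated asymptotic expansion gives a remainder of the order of the next term, and that $1+r_0\ge 1/2$ for large $u$.
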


%\begin{proof}[Proof sketch]
%The proof follows from the asymptotic series expansion given in Property~\ref{prop:series_expansion}. Positivity, smoothness and boundedness of
%\(s_X(\cdot)\) are standard. 
%\end{proof}

\begin{proof}
    From Property~\ref{prop:series_expansion},
    \begin{equation*}
        p_X(u)=c_{0,1}u^{-\alpha-1}(1+r_0(u))
    \end{equation*}
    where \(r_0(u):=O(u^{-\alpha})\) as \(u\to\infty\), and 
    \begin{equation*}
        p_X^{(k)}(u)=c_{k,1}u^{-\alpha-k-1}(1+r_k(u))
    \end{equation*}
    where \(r_k(u)=O(u^{-\alpha})\) as \(u\to\infty\). Therefore, the score function
    \begin{align*}
        s_X(u)=\frac{p_X'(u)}{p_X(u)}&=\frac{c_{1,1}}{c_{0,1}}\,u^{-1}\cdot\frac{1+r_1}{1+r_0}\\
        &=-\frac{1+\alpha}{u}(1+O(u^{-\alpha}))\\
        &=-\frac{1+\alpha}u+O(u^{-1-\alpha}).
    \end{align*}
    For the derivative of the score function, first note that
    \begin{align*}
        \frac{p^{(k)}(u)}{p(u)}&=\frac{c_{k,1}}{c_{0,1}}\,u^{-k}\cdot\frac{1+r_k}{1+r_0}\\
        &=(-1)^k\prod_{\ell=1}^{k}(\ell+\alpha)\,u^{-k}(1+O(u^{-\alpha}))\\
        &=(-1)^k\prod_{\ell=1}^{k}(\ell+\alpha)\,u^{-k}+O(u^{-k-\alpha}).
    \end{align*}
    Consequently,
    \begin{align*}
        s_X'(u)&=\frac{p_X''(u)}{p_X(u)}-s_X^2(u)\\
        &=\frac{(2+\alpha)(1+\alpha)}{u^2}-\frac{(1+\alpha)^2}{u^2}+O(u^{-2-\alpha})\\
        &=\frac{1+\alpha}{u^2}+O(u^{-2-\alpha})
    \end{align*}
    and similarly
    \begin{align*}
        s_X''(u)&=\frac{p^{(3)}(u)}{p(u)}-3\frac{p''(u)}{p(u)}s_X(u)+2s_X^3(u)\\
        &=-\frac{2(1+\alpha)}{u^3}+O(u^{-3-\alpha}).
    \end{align*}
    Therefore, \(\zeta_X(u)=s_X(u)+\frac{1+\alpha}{u}=O(u^{-1-\alpha})\), \(\zeta_X'(u)=s_X'(u)-\frac{1+\alpha}{u^2}=O(u^{-2-\alpha})\), and \(\zeta_X''(u)=s_X''(u)+\frac{2(1+\alpha)}{u^3}=O(u^{-3-\alpha})\). The \(O\)-bounds hold for sufficiently large \(u> u_0\). Over the interval \([1,u_0]\), positivity and smoothness of \(p_X(\cdot)\) makes \(s_X(u)=\tfrac{p_X'(u)}{p_X(u)}\in C^\infty\). So, \(\zeta_X^{(k)}(u)\), for all $k \ge 0$, is continuous, and therefore bounded on $|u| \geq 1$. For \(u<-1\), it also holds true since \(\zeta_X^{(k)}(u)\) is either an odd or even function, so \(|\zeta_X^{(k)}(u)|=|\zeta_X^{(k)}(-u)|\). Finally,
    \begin{itemize}
        \item $|s_X(u)|\le \tfrac{C}{|u|}$ for $|u|\ge1$: from $|s_X|\le\frac{1+\alpha}{|u|}+C_X|u|^{-1-\alpha}\le\frac{1+\alpha+C_X}{|u|}$ on $|u|\ge1$.
        \item $\sup_u|s_X|<\infty$: bounded by the previous estimate for $|u|\ge1$, and by continuity on $[-1,1]$.
    \end{itemize}
\end{proof}

\section{Technical Lemmas}
\label{app:proofs}
\begin{definition}[Difference kernel]
Fix \(x_1\neq x_2\). Define
\begin{equation*}
\begin{aligned}
\Delta(y,x_1,x_2)
&:=p_{Y|X}(y|x_1)-p_{Y|X}(y|x_2)\\
&=p_Y(y)\!\left[
\frac{p_A(x_1-y)}{p_X(x_1)}
-\frac{p_A(x_2-y)}{p_X(x_2)}
\right].
\end{aligned}
\label{eq:Delta}
\end{equation*}
\end{definition}

% ============================================================
% Properties of the difference kernel
% ============================================================
\begin{property}[Properties of the difference kernel]
The function \(\Delta(y,x_1,x_2)\) satisfies
\begin{align}
\textnormal{(P1)}\quad&
\int_{\mathbb R}\Delta(y,x_1,x_2)\,dy=0,
\label{eq:P1}\\
\textnormal{(P2)}\quad&
\int_{\mathbb R}y\,\Delta(y,x_1,x_2)\,dy
=\rho^\alpha(x_1-x_2).
\label{eq:P2}
\end{align}
Moreover, for some
\(C=C(x_1,x_2,\alpha,\gamma_Y,\gamma_A)\),
\begin{equation}
\textnormal{(P3)}\qquad
|\Delta(y,x_1,x_2)|\le C(1+|y|)^{-2-2\alpha}.
\label{eq:P3}
\end{equation}
Consequently, as \(R\to\infty\),
\begin{align}
\int_{|y|>R}|\Delta(y,x_1,x_2)|\,dy
&=O(R^{-1-2\alpha}), \label{eq:tail0}\\
\int_{|y|>R}|y|\,|\Delta(y,x_1,x_2)|\,dy
&=O(R^{-2\alpha}). \label{eq:tail1}
\end{align}
\end{property}

\begin{proof}
Equation~\eqref{eq:P1} follows because
\(p_{Y|X}(\cdot|x_1)\) and \(p_{Y|X}(\cdot|x_2)\) are
probability densities. Moreover,
\(\mathbb E[Y|X=x]=\rho^\alpha x\)
\cite{Chouity2026Linear,11653965},
which gives~\eqref{eq:P2}.

For~\eqref{eq:P3}, Proposition~\ref{prop:score} gives
\(p_Y(y)=O(|y|^{-1-\alpha})\) and
\(p_A(x_i-y)=O(|y|^{-1-\alpha})\) as
\(|y|\to\infty\). Since both densities are bounded and
\(p_X(x_i)>0\), \eqref{eq:P3} follows.
Equations~\eqref{eq:tail0}--\eqref{eq:tail1} then follow by
direct integration.
\end{proof}

% ============================================================
% IB stationarity
% ============================================================

\begin{lemma}[Necessary condition for global optimality of the linear encoder]
\label{lem:stationarity}
Let \(M:= T - k Y = kA+N\) be the effective noise in the induced channel \(Y\to T\) for the optimal stable linear encoder. For \(\beta>0\), suppose that
\(p_{T|X}(t|x)\) achieves the optimum IB and satisfies
\eqref{eq:selfconsistent}. Then, for every \(x_1\neq x_2\),
the function \(\xi(t,x_1,x_2)\) in~\eqref{eq:Xi} is constant in \(t\),
and
\begin{equation*}
\partial_t\xi_{\mathrm{lin,opt}}(t,x_1,x_2) \equiv 0,
\end{equation*}
where 
\begin{equation}
\partial_t\xi_{\mathrm{lin}}(t,x_1,x_2)=s_N(t-kx_1)-s_N(t-kx_2) -\beta\!\int_{\mathbb R}
\Delta(y,x_1,x_2)\,s_M(t-ky)\,dy. \label{eq:Xiprime}
\end{equation}
\end{lemma}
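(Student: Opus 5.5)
The plan is to take logarithms in the self-consistent equation~\eqref{eq:selfconsistent} and form the difference between two input points $x_1\neq x_2$, so that the unknown marginal $p_T(t)$ cancels. For the stable linear encoder, $p_{T|X}(t|x)=p_N(t-kx)$. Taking logs in~\eqref{eq:selfconsistent} gives
\begin{equation*}
\log p_N(t-kx)=\log p_T(t)-\log Z(x,\beta)-\beta\int_{\mathbb R}p_{Y|X}(y|x)\log\frac{p_{Y|X}(y|x)}{p_{Y|T}(y|t)}\,dy .
\end{equation*}
The term $\int p_{Y|X}\log p_{Y|X}\,dy=-h(Y|X=x)$ depends only on $x$, and so does $\log Z(x,\beta)$. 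Subtracting the identity at $x_2$ from the one at $x_1$ leaves
\begin{equation*}
\log p_N(t-kx_1)-\log p_N(t-kx_2)-\beta\int_{\mathbb R}\Delta(y,x_1,x_2)\log p_{Y|T}(y|t)\,dy = c(x_1,x_2),
\end{equation*}
where $c(x_1,x_2)$ collects the $\log Z$ and conditional-entropy terms. The left-hand side is exactly $\xi(t,x_1,x_2)$ from~\eqref{eq:Xi}, so $\xi$ is constant in $t$.

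Next I differentiate in $t$. Bayes' rule gives $p_{Y|T}(y|t)=p_Y(y)\,p_M(t-ky)/p_T(t)$, where $M=kA+N$ is the effective noise of the channel $Y\to T$. The $\log p_Y(y)$ and $\log p_T(t)$ pieces are integrated against $\Delta$. By (P1), $\int\Delta\,dy=0$, so the $\log p_T(t)$ term drops out. The $\log p_Y(y)$ term does not depend on $t$, so it vanishes under $\partial_t$; it is also finite, since by (P3) $|\Delta|$ decays like $|y|^{-2-2\alpha}$ while $\log p_Y$ grows only logarithmically. Replacing $\log p_{Y|T}$ by $\log p_M(t-ky)$ therefore changes $\xi$ only by a constant. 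Differentiating then yields~\eqref{eq:Xiprime}, with the scores $s_N=p_N'/p_N$ and $s_M=p_M'/p_M$; both $N$ and $M$ are $\SaS$ by~\eqref{eq:stabilityproperty}.

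The main obstacle is justifying differentiation under the integral sign in $t$. By Proposition~\ref{prop:score}, $s_M$ is bounded uniformly in its argument. Property (P3) shows $\Delta\in L^1$, so $|\Delta(y)|\sup|s_M|$ is an integrable dominating function independent of $t$, and dominated convergence, via the mean value theorem, gives the derivative. A smaller technical point is that~\eqref{eq:selfconsistent} is an a.e.\ statement in $x$. Since every density involved is positive and continuous, the identity extends to all $x_1,x_2$, and constancy of the smooth function $\xi$ then forces $\partial_t\xi_{\mathrm{lin,opt}}\equiv0$.
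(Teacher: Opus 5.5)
Your proposal is correct and takes essentially the same route as the paper: logarithms of the self-consistent equation, subtraction at $x_1$ and $x_2$ so the $x$-only terms collapse to a constant, the factorization $p_{Y|T}(y|t)=p_Y(y)p_M(t-ky)/p_T(t)$ with (P1) eliminating the $p_T$ contribution, and dominated convergence via $\sup_u|s_M(u)|<\infty$ and $\Delta\in L^1$ from (P3). The differences are cosmetic: you cancel $\log p_T(t)$ with (P1) before differentiating, whereas the paper differentiates first and then cancels $s_T(t)\int\Delta\,dy$. You also add a remark about extending the a.e.\ stationarity condition to all $x_1,x_2$, which the paper does not address.
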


\begin{proof}
Taking logarithms in~\eqref{eq:selfconsistent} and expanding
the divergence gives
\begin{align}
&\log p_{T|X}(t|x)\nonumber\\
&=\log p_T(t)
+\beta\!\int p_{Y|X}(y|x)
   \log p_{Y|T}(y|t)\,dy
\nonumber\\
&-\beta\!\int p_{Y|X}(y|x)
   \log p_{Y|X}(y|x)\,dy
-\log Z(x,\beta).
\label{eq:IB-log}
\end{align}
The last two terms are independent of \(t\).

The first integral is absolutely convergent since at \(|y|\to\infty\)
\[
\begin{aligned}
\log p_{Y|T}(y|t)
&=\log p_Y(y)+\log p_M(t-ky)-\log p_T(t)\\
&=O(\log|y|),
\end{aligned}
\]
while
\(p_{Y|X}(y|x)= \frac{p_A(x-y)p_Y(y)}{p_x(x)} = O(|y|^{-2-2\alpha})\).

Evaluating~\eqref{eq:IB-log} at \(x_1\) and \(x_2\) and
subtracting shows that \(\xi(t,x_1,x_2)\) given in equation~\eqref{eq:Xi} is constant in \(t\), since
\(p_{T|X}(t|x_i)=p_N(t-kx_i)\), for $i = 1,2$. Furthermore,
\(
\partial_t\log p_{Y|T}(y|t)
=s_M(t-ky) - s_T(t),
\)
where $s_M(\cdot)$ and $s_T(\cdot)$ are the score functions of $p_M$ and $p_T$ respectively. Differentiation under the integral sign follows by dominated
convergence because \(\sup_u|s_M(u)|<\infty\) and \(\Delta(y,x_1,x_2)\in L^1(\mathbb R)\), for fixed \(x_1\neq x_2\) by~\eqref{eq:P3}.
The term \(s_T(t)\int_{\mathbb{R}} \Delta(y,x_1,x_2) \, dy\) vanishes by~\eqref{eq:P1}, while
\(\partial_t\log p_N(t-kx_i)=s_N(t-kx_i)\).
This proves~\eqref{eq:Xiprime}.
\end{proof}

\begin{lemma}[Asymptotics under the best linear encoder]\label{lem:stationarity-asymptotics}

For the linear encoder $T = kX + N$ given in Theorem~\ref{thm:optimal_LSIB_scalar}, the function $\xi(t,x_1,x_2)$ given in equation~\eqref{eq:Xi} satisfies
    \begin{equation}
    \lim_{t\to\infty}t^{2}\partial_t\xi_{\mathrm{lin}}(t,x_1,x_2)
      %&=-(1+\alpha)k(x_1-x_2)+\beta(1+\alpha)k\rho^\alpha(x_1-x_2)\nonumber\\
      =(1+\alpha)k(x_1-x_2)\bigl[\beta\rho^\alpha-1\bigr].
    \end{equation}
\end{lemma}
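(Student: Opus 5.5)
We propose to compute the limit term by term in the expression \eqref{eq:Xiprime}, using the score expansions of Proposition~\ref{prop:score} for both $N\sim\SaS(1)$ and $M=kA+N\sim\SaS(\gamma_M)$ with $\gamma_M^\alpha=k^\alpha\gamma_A^\alpha+1$ (stable by \eqref{eq:stabilityproperty}). The first two terms are elementary. By the mean value theorem, $s_N(t-kx_1)-s_N(t-kx_2)=-k(x_1-x_2)\,s_N'(\tau)$ for some $\tau$ between $t-kx_1$ and $t-kx_2$. Since $s_N'(u)=(1+\alpha)u^{-2}+O(|u|^{-2-\alpha})$ and $\tau/t\to1$, we get
\[
t^2\bigl[s_N(t-kx_1)-s_N(t-kx_2)\bigr]\to-(1+\alpha)k(x_1-x_2).
\]

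For the integral term, the plan is to show
\[
t^2\int_{\mathbb R}\Delta(y,x_1,x_2)\,s_M(t-ky)\,dy\to-(1+\alpha)k\rho^\alpha(x_1-x_2),
\]
so that multiplying by $-\beta$ gives the second contribution. We split the $y$-line at $|y|\le R_t:=t/(2k)$ and its complement. On the inner region, $|t-ky|\ge t/2$, so Taylor's theorem with the bound on $s_M''$ gives
\[
s_M(t-ky)=s_M(t)-ky\,s_M'(t)+\tfrac12k^2y^2 s_M''(\eta)
\]
with $|s_M''(\eta)|\le C t^{-3}$. Integrating against $\Delta$ and using (P1) and (P2), the zeroth-order term contributes $s_M(t)\bigl(-\int_{|y|>R_t}\Delta\,dy\bigr)=O(t^{-1})O(t^{-1-2\alpha})$ by \eqref{eq:tail0}. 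The first-order term contributes $-k\,s_M'(t)\bigl[\rho^\alpha(x_1-x_2)-\int_{|y|>R_t}y\Delta\,dy\bigr]$, whose main part, after multiplying by $t^2$, tends to $-(1+\alpha)k\rho^\alpha(x_1-x_2)$, while the tail correction is $O(t^{-2})O(t^{-2\alpha})$ by \eqref{eq:tail1}. The remainder is bounded by $Ct^{-3}\int_{|y|\le R_t}y^2|\Delta|\,dy$. By (P3) the integrand is $O(|y|^{-2\alpha})$, so the integral is $O(1+R_t^{1-2\alpha}+\log R_t)$, and $t^2$ times the remainder is $o(1)$ for every $\alpha\in(0,2)$. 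This uses $1-2\alpha<1$, so the total is $O(t^{-2}\cdot t^{1-2\alpha})$ or $O(t^{-1}\log t)$, both vanishing. On the outer region we use $\sup|s_M|<\infty$ and \eqref{eq:tail0}, giving $t^2\cdot O(t^{-1-2\alpha})$.

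The main obstacle is this outer tail. The bound $t^{1-2\alpha}$ fails to vanish when $\alpha\le1/2$, so the crude sup bound is insufficient there. In that case we refine the estimate by further splitting the outer region into $|t-ky|\ge1$, where $|s_M(u)|\le C/|u|$, and the window $|t-ky|<1$, which has length $O(1)$ and where $|\Delta|=O(t^{-2-2\alpha})$. If that is still too weak, we will exploit the sharper decay of $\Delta$ coming from its difference structure. Specifically, $p_A(x_1-y)/p_X(x_1)-p_A(x_2-y)/p_X(x_2)$ has leading tail $c|y|^{-1-\alpha}\bigl(1/p_X(x_1)-1/p_X(x_2)\bigr)$, and this vanishes to leading order only when $p_X(x_1)=p_X(x_2)$, so this route needs care. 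We expect the region near $y\approx t/k$ to be the delicate point. Combining the two contributions yields $(1+\alpha)k(x_1-x_2)[\beta\rho^\alpha-1]$.
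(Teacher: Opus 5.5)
Your architecture is the same as the paper's. The encoder term uses the mean-value theorem with Proposition~\ref{prop:score}. The decoder integral uses a second-order Taylor expansion on an inner region together with \eqref{eq:P1}, \eqref{eq:P2}, \eqref{eq:tail0}, \eqref{eq:tail1}, plus separate control of the window $|t-ky|\le 1$. Your inner cutoff $|y|\le t/(2k)$ is tidier than the paper's $J(t)=t^\delta$. The Taylor remainder there is still $O(t^{-3}\log t+t^{-2-2\alpha})$, so the paper's intermediate band $t^\delta<|y|\le t/(2k)$ disappears. (After multiplying by $t^2$ this is $O(t^{-1}\log t+t^{-2\alpha})$, not the $O(t^{-2}\cdot t^{1-2\alpha})$ you wrote, but your conclusion stands.)

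The gap is the outer region $\{|y|>t/(2k)\}$ when $\alpha\le 1/2$. You correctly see that $\sup_u|s_M(u)|<\infty$ with \eqref{eq:tail0} gives only $t^2\cdot O(t^{-1-2\alpha})$, and you name the right refinement. However, you never estimate the piece $\{|y|>t/(2k),\ |t-ky|\ge 1\}$, so the proof is unfinished there.

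Your fallback is also unavailable. The leading tail of $\Delta$ is a constant times $|y|^{-2-2\alpha}\bigl(1/p_X(x_1)-1/p_X(x_2)\bigr)$, which is nonzero whenever $p_X(x_1)\neq p_X(x_2)$, so there is no extra cancellation to exploit. None is needed.

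Substitute $w=ky-t$ and use $|s_M(u)|\le C/|u|$. The piece is then bounded by $\frac{C}{k}\int |\Delta((w+t)/k,x_1,x_2)|\,|w|^{-1}\,dw$ over $\{|w|\ge 1,\ |w+t|>t/2\}$.
\begin{itemize}
\item On $1\le |w|\le t/2$ you have $|w+t|\ge t/2$, so \eqref{eq:P3} gives $|\Delta|=O(t^{-2-2\alpha})$. Since $\int_{1\le|w|\le t/2}|w|^{-1}\,dw=O(\log t)$, this part is $O(t^{-2-2\alpha}\log t)$.
\item On $|w|>t/2$ the constraint forces $w>t/2$ or $w<-3t/2$, hence $|w+t|\ge |w|/3$. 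The integrand is then $O(|w|^{-3-2\alpha})$, contributing $O(t^{-2-2\alpha})$.
\end{itemize}

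Together with your window bound $O(t^{-2-2\alpha})$, the whole outer region is $O(t^{-2-2\alpha}\log t)=o(t^{-2})$ for every $\alpha\in(0,2)$. This is exactly the paper's outer-region estimate. Inserting it completes your proof and makes your case split at $\alpha=1/2$ unnecessary. The neighborhood of $y\approx t/k$ that you flag as delicate costs only a logarithm.
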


\begin{proof}
By Proposition~\ref{prop:score},
\(s_N(u)=\zeta_N(u)-(1+\alpha)/u\). Hence
\begin{align}
&s_N(t-kx_1)-s_N(t-kx_2)
\nonumber\\
&=
\zeta_N(t-kx_1)-\zeta_N(t-kx_2)
-(1+\alpha)
\left(
\frac{1}{t-kx_1}-\frac{1}{t-kx_2}
\right)
\nonumber\\
&=
\zeta_N(t-kx_1)-\zeta_N(t-kx_2)
-\frac{(1+\alpha)k(x_1-x_2)}{t^2}
+O(t^{-3})\nonumber
\end{align}
as \(t\to\infty\). By the mean value theorem,
\[
\zeta_N(t-kx_1)-\zeta_N(t-kx_2)
=k(x_2-x_1)\zeta_N'(\eta_t)
\]
for some \(\eta_t\) between \(t-kx_1\) and \(t-kx_2\).
Hence, equation~\eqref{eq:Phi-bounds} gives
\(\zeta_N'(\eta_t)=O(t^{-2-\alpha})=o(t^{-2})\),
\begin{equation}
s_N(t-kx_1)-s_N(t-kx_2) =
-\frac{(1+\alpha)k(x_1-x_2)}{t^2}
+o(t^{-2}).
\label{eq:encoder-asymptotic}
\end{equation}

Next we consider the term \(\int_{\mathbb R}
\Delta(y,x_1,x_2) \, s_M(t-ky)\,dy\). To this end, choose \(\delta\in\left(\frac{1}{1+2\alpha},1\right)\). Let \(J(t):=t^\delta\) 
and define, within this proof,
\[
\begin{aligned}
A_t&:=\{|y|\le J(t)\}, \qquad
C_t:=\{|t-ky|\le 1\},\\
B_t&:=\{|y|>J(t)\}\setminus C_t.
\end{aligned}
\]
For all sufficiently large \(t\), these sets partition
\(\mathbb R\).

\paragraph*{Inner region $A_t$}
Since \(k\tfrac{J(t)}{t}\to0\), for sufficiently large \(t\) and
\(y\in A_t\), any point \(\eta_{y,t}\) between \(t\) and
\(t-ky\) satisfies
\(\tfrac12t\le\eta_{y,t}\le2t\). Thus
\[
s_M(t-ky)
=s_M(t)-ky\,s_M'(t)
+\frac{k^2y^2}{2}s_M''(\eta_{y,t}).
\]
Using~\eqref{eq:score-exp},
\[
s_M(t)=O(t^{-1}),\quad
s_M'(t)=\frac{1+\alpha}{t^2}+O(t^{-2-\alpha}),
\]
and
\(\sup_{y\in A_t}|s_M''(\eta_{y,t})|=O(t^{-3})\).
Therefore,
\begin{equation}
\int_{A_t}\Delta(y,x_1,x_2)\,s_M(t-ky)\,dy
=s_M(t)\int_{A_t}\Delta(y,x_1,x_2)\,dy - k s_M'(t)\int_{A_t}y\,\Delta(y,x_1,x_2)\,dy
+R_t,
\label{eq:inner-decomp}
\end{equation}
where
\[
|R_t|
\le Ct^{-3}\int_{A_t}y^2\,|\Delta(y,x_1,x_2)|\,dy.
\]

By~\eqref{eq:P1} and~\eqref{eq:tail0},
\[
\int_{A_t} \Delta(y,x_1,x_2)\,dy
=O(J(t)^{-1-2\alpha}),
\]
so the first term in~\eqref{eq:inner-decomp} is
\(O(t^{-1-\delta(1+2\alpha)})=o(t^{-2})\).

Likewise, \eqref{eq:P2} and~\eqref{eq:tail1} give
\[
\int_{A_t}y\,\Delta(y,x_1,x_2)\,dy
=\rho^\alpha(x_1-x_2)+O(t^{-2\delta\alpha}),
\]
and hence
\begin{align*}
&-k s_M'(t)\int_{A_t}y\,\Delta(y,x_1,x_2)\,dy
\nonumber\\
&\quad=
-\frac{(1+\alpha)k\rho^\alpha(x_1-x_2)}{t^2}
+O(t^{-2-\alpha})
+O(t^{-2-2\delta\alpha}).
\label{eq:inner-main}
\end{align*}

Finally, equation~\eqref{eq:P3} yields
\[
\int_{A_t}y^2|\,\Delta(y,x_1,x_2)|\,dy
=
\begin{cases}
O(1), & \alpha>\frac12,\\
O(\log t), & \alpha=\frac12,\\
O(t^{\delta(1-2\alpha)}), & \alpha<\frac12.
\end{cases}
\]
Thus \(R_t=o(t^{-2})\), and
\begin{equation*}
\int_{A_t} \Delta(y,x_1,x_2)\, s_M(t-ky)\,dy
=
-\frac{(1+\alpha)k\rho^\alpha(x_1-x_2)}{t^2}
+o(t^{-2}).
\label{eq:inner-result}
\end{equation*}

\paragraph*{Outer region $B_t \cup C_t$}
Since \(k>0\) WLoG, We obtain
\[
C_t=\left[\frac{t-1}{k},\frac{t+1}{k}\right],
\qquad |C_t|=\frac{2}{k}.
\]

On \(C_t\), \(y=\tfrac{t}{k}+O(1)\), so
\(|\Delta(y,x_1,x_2)|=O(t^{-2-2\alpha})\).  Since \(\delta<1\), \(C_t\subset\{|y|>J(t)\}\) for large \(t\), and 
\[
\int_{C_t} \Delta(y,x_1,x_2)\,s_M(t-ky)\,dy
=O(t^{-2-2\alpha})=o(t^{-2}),
\]
where we  used Proposition~\ref{prop:score}
\[
|s_M(u)|\le \frac{C}{|u|},\quad |u|\ge1,
\qquad
\sup_u|s_M(u)|<\infty.
\]
On \(B_t\cap\{|y|\le \tfrac{t}{2k}\}\),
\(|t-ky|\ge t/2\), and therefore
\begin{align*}
&\left|
\int_{\substack{B_t\\ |y|\le t/(2k)}}
\Delta(y,x_1,x_2)\,s_M(t-ky)\,dy
\right|
\nonumber\\
&\quad\le
\frac{2C}{t}\int_{|y|>J(t)}|\Delta(y,x_1,x_2)|\,dy
=O(t^{-1-\delta(1+2\alpha)})
=o(t^{-2}).
\label{eq:outer-near}
\end{align*}

Finally, on \(B_t\cap\{\tfrac{t}{2k}\}\),
\(|t-ky|>1\). With \(w=ky-t\), the contribution is bounded by
\begin{equation*}
\frac{C}{k}
\int_{\substack{|w|>1\\ |w+t|>t/2}}
\frac{
\left|\Delta\!\left((w+t,x_1,x_2)/k\right)\right|
}{|w|}\,dw .
\label{eq:outer-w}
\end{equation*}
For \(1<|w|\le t/2\),
\(|w+t|\ge t/2\), so this part is
\[
O(t^{-2-2\alpha}\log t)=o(t^{-2}).
\]
For \(|w|>\tfrac{t}{2}\), the condition \(|w+t|>\tfrac{t}{2}\) implies either
\(w>\tfrac{t}{2}\) or \(w<-3\tfrac{t}{2}\), and hence
\(|w+t|\ge\frac{|w|}{3}\). Thus the remaining part is
\[
O\!\left(\int_{|w|>t/2}|w|^{-3-2\alpha}\,dw\right)
=O(t^{-2-2\alpha})
=o(t^{-2}).
\]
Combining the inner and outer contributions, we obtain
\begin{equation}
    \int_{\mathbb R}\Delta(y,x_1,x_2)\,s_M(t-ky)\,dy =
-\frac{(1+\alpha)k\rho^\alpha(x_1-x_2)}{t^2}
+o(t^{-2}).
\label{eq:decoder-asymptotic}
\end{equation}
Since
\begin{equation}
\partial_t\xi_{\mathrm{lin}}(t,x_1,x_2)
= s_N(t-kx_1)-s_N(t-kx_2) - \beta\!\int_{\mathbb R}
\Delta(y,x_1,x_2)\,s_M(t-ky)\,dy,\nonumber
\end{equation}
using \eqref{eq:encoder-asymptotic} and \eqref{eq:decoder-asymptotic} gives
\begin{align}
t^2\,\partial_t\xi_{\mathrm{lin}}(t,x_1,x_2)&=-(1+\alpha)k(x_1-x_2) + \beta(1+\alpha)k\rho^\alpha(x_1-x_2)+o(1)\nonumber\\
    &=(1+\alpha)k(x_1-x_2)[\beta\rho^\alpha-1]+o(1)\nonumber,
\end{align}
which completes the proof.
\end{proof}

\section*{Acknowledgment}
The research of M. Assaad was funded in part by the Agence Nationale
de la Recherche (ANR) under the ANR-24-IAS1-0002-02
COMSEMA project.  The authors acknowledge the use of ChatGPT for LaTeX-related assistance and code synthesis used to produce the graphs in Figure~\ref{fig:three_cases}. The use of Claude was limited to assistance in the evaluation of the integrals in the proof of Lemma~\ref{lem:stationarity-asymptotics} and the formulation of high-level arguments in Appendix~\ref{app:alpha-stable_probability_laws}, which were also independently verified by the authors. The main ideas in this paper, the technical content and most of the derivations, along with the results and conclusions, were produced by the authors.  

\bibliographystyle{IEEEtran}
\bibliography{ref}

\end{document}